\documentclass[10pt]{amsart} 
\title[Combinatorics of the ASEP on a semi-infinite lattice]{Combinatorics of the 
asymmetric exclusion process on a semi-infinite lattice}
\author{Tomohiro Sasamoto}
\address{Department of Mathematics, Chiba university
          Yayoi-cho 1-33, Inage, Chiba 263-8522, Japan}
\email{sasamoto@math.s.chiba-u.ac.jp}
\thanks{The first author would like to acknowledge the financial support from 
          KAKENHI (2274054)}
\author{Lauren  Williams}
\address{Department of Mathematics, Evans Hall 913, 
University of California, Berkeley}
\email{williams@math.berkeley.edu}
\thanks{The second author is partially supported by 
an NSF CAREER award and an Alfred Sloan Fellowship.}

\subjclass[2000]{Primary 05E10; Secondary 82B23, 60C05}

\keywords{}

\usepackage{pstricks, pst-node, amssymb}

\psset{unit=1pt, arrowsize=4pt, linewidth=.7pt}
\psset{linecolor=blue}
\newgray{grayish}{.90}
\newrgbcolor{embgreen}{0 .5 0}

\def\vblack(#1, #2)#3{\cnode*[linecolor=black](#1, #2){3}{#3}}
\def\vwhite(#1,#2)#3{\cnode[linecolor=black,fillcolor=white,fillstyle=solid](#1,
#2){3}{#3}}
\countdef\x=23
\countdef\y=24
\countdef\z=25
\countdef\t=26

\def\tbox(#1,#2)#3{
\x=#1 \y=#2
\multiply\x by 12
\multiply\y by 12
\z=\x \t=\y
\advance\z by 12
\advance\t by 12
\psline(\x,\y)(\x,\t)(\z,\t)(\z,\y)(\x,\y)
\advance\x by 6
\advance\y by 6
\rput(\x,\y){{\bf #3}}}

\usepackage{amsmath, amsthm, amssymb, amsbsy}
\usepackage{amsfonts, latexsym, stmaryrd, amscd, xy}
\usepackage[mathscr]{eucal}
\usepackage{epsfig}

\newtheorem{theorem}{Theorem}[section]

\newtheorem{proposition}[theorem]{Proposition}

\newtheorem{example}[theorem]{Example}
\newtheorem{corollary}[theorem]{Corollary}

\newtheorem{remark}[theorem]{Remark}

\newtheorem{definition}[theorem]{Definition}

\usepackage{amsfonts}
\usepackage{xy}
\usepackage{amssymb}
\usepackage{amsmath}

\newcommand{\T}{\mathcal{T}}

\newcommand{\Z}{\mathbb Z}
\newcommand{\R}{\mathbb R}
\newcommand{\E}{\mathbb E}

\newcommand{\LL}{\mathcal{L}}

\DeclareMathOperator{\wt}{wt}

\newcommand{\thmrefer}[1]{\renewcommand\thetheorem
  {\protect\ref{#1}}\addtocounter{theorem}{-1}}

\xyoption{all}
\CompileMatrices

\begin{document}

\keywords{asymmetric exclusion process, 
matrix ansatz}


\begin{abstract}
We study two versions of the asymmetric exclusion process (ASEP) -- 
an ASEP on a semi-infinite lattice $\Z^+$ with an open left boundary,
and an ASEP on a finite lattice with open left and right boundaries --
and we demonstrate a surprising relationship between their
stationary measures.
The semi-infinite ASEP was first studied 
by Liggett \cite{Liggett} and then
Grosskinsky \cite{G}, while the finite ASEP had been introduced 
earlier by Spitzer \cite{Spitzer} and Macdonald-Gibbs-Pipkin \cite{bio}.
We show that the 
finite correlation functions
involving the first $L$ sites
for the stationary measures on the semi-infinite ASEP can be obtained as a \emph{nonphysical} 
specialization of 
the stationary distribution of an ASEP on a finite one-dimensional 
lattice with $L$ sites.
Namely, if the output and input rates of particles at the right
boundary of the finite ASEP are $\beta$ and $\delta$, respectively,
and we set $\delta=-\beta$, then this specialization corresponds
to sending the right boundary of the lattice to infinity.
Combining this observation
with  work of the second author and Corteel \cite{CW3, CW4},
we obtain a combinatorial formula for finite correlation functions
of the ASEP on a semi-infinite lattice.
\end{abstract}

\maketitle

\section{Introduction}

The asymmetric exclusion process (ASEP) is a model in which particles hop 
on a lattice, subject to the condition that there is at most 
one particle per site. It was first introduced by 
Spitzer \cite{Spitzer} and also by Macdonal-Gibbs-Pipkin \cite{bio}
in the context of protein synthesis, who studied this model 
on a finite lattice of $L$ sites.  A version of the model 
where particles hop on the semi-infinite lattice $\Z^+$ was   
studied by Liggett \cite{Liggett}, and subsequently  
by Grosskinsky in his thesis \cite{G}.
In the semi-infinite ASEP, particles may enter and exit  at the left
boundary at rates $\alpha$ and $\gamma$, respectively, and in the bulk,
particles may hop right and left to neighboring sites
of the lattice at rates $1$ and $q$.  
Let $c$ be an additional parameter;
it winds up determining the 
\emph{stationary current} of particles.  
We denote states by  
vectors $\eta = (\eta_1, \eta_2,\dots )$, where 
$\eta_i \in \{0,1\}$, and we  
denote the set of all states by $X$.
\begin{figure}[h]
\centering
\includegraphics[height=.58in]{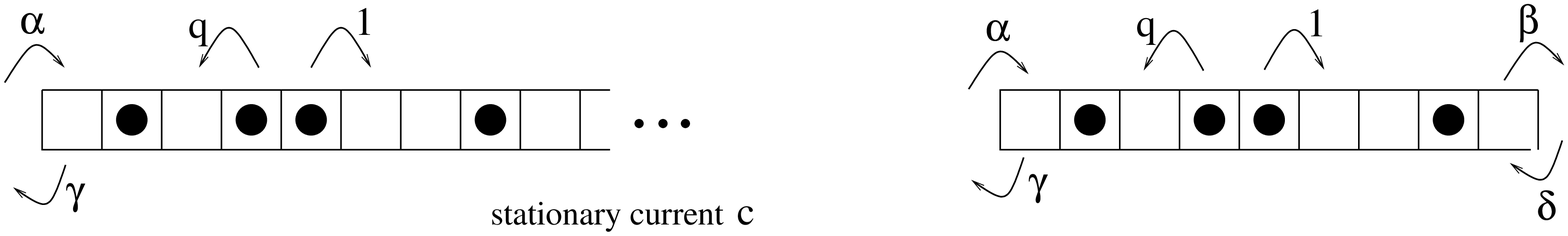}
\caption{}
\label{Models}
\end{figure}

In the ASEP on a finite one-dimensional lattice of $L$
sites, particles may enter and exit at the left boundary
at rates $\alpha$ and $\gamma$, and may exit and enter 
at the right boundary at rates $\beta$ and $\delta$.  In the bulk,
particles may hop right and left to neighboring sites of the lattice
at rates $1$ and $q$.

We refer to these two flavors of the ASEP as the semi-infinite ASEP and 
the finite ASEP.  The two models are illustrated in Figure \ref{Models}.
In both models we assume that all parameters are non-negative.

Given a measure $\mu$ on $X$,
and a word $(\eta_1,\dots,\eta_L) \in \{0,1\}^L$, the 
\emph{correlation function} 
$\langle \eta_1 \dots \eta_L \rangle$
is the expected value
with respect to $\mu$ that the leftmost $L$ sites of 
a state in the semi-infinite ASEP will be 
$\eta_1,\dots,\eta_L$.

Our first result is the following.
\begin{theorem}\label{mainthm}
The finite correlation functions
involving the leftmost $L$ sites 
of the stationary measures 
of the semi-infinite ASEP can be obtained from 
the stationary distribution for the finite ASEP
on a lattice of $L$ sites, after setting 
$\beta=c$ and $\delta=-c$.
More specifically, the correlation
function 
$\langle \eta_1 \dots \eta_L \rangle$
of the stationary measure for the semi-infinite ASEP corresponding to 
the stationary current $c$ is equal to 
$\mu^{fin}(\alpha,c,\gamma, -c;q)(\eta_1,\dots,\eta_L)$,
the quantity one obtains 
by setting $\beta=c$ and $\delta = -c$
in the steady state
probability of state $(\eta_1,\dots,\eta_L)$ in the finite 
ASEP.
\end{theorem}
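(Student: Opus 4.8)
The plan is to realize the right-hand side of the theorem as the family of finite-dimensional marginals of a single measure $\mu_\infty$ on $X$, to verify directly that $\mu_\infty$ is stationary for the semi-infinite ASEP, and then to invoke the classification of the semi-infinite stationary measures to identify $\mu_\infty$. The starting point is the Derrida--Evans--Hakim--Pasquier matrix ansatz for the finite ASEP: with operators $D,E$ and vectors $\langle W|,|V\rangle$ satisfying $DE-qED=D+E$, $\langle W|(\alpha E-\gamma D)=\langle W|$, and $(\beta D-\delta E)|V\rangle=|V\rangle$, the steady-state weight of $(\eta_1,\dots,\eta_L)$ is $\langle W|\prod_{i=1}^L(\eta_iD+(1-\eta_i)E)|V\rangle$, normalized by $Z_L=\langle W|(D+E)^L|V\rangle$. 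These three relations are purely algebraic identities that force the transposed finite-ASEP generator to kill the vector of weights, and they continue to hold when we set $\beta=c$, $\delta=-c$, even though this is ``nonphysical'' since then $\delta<0$. The point of this specialization is that the right-boundary relation becomes the eigenvector equation $(D+E)|V\rangle=\tfrac1c|V\rangle$; hence $Z_L=c^{-L}\langle W|V\rangle$, and (assuming a representation with $\langle W|V\rangle\neq0$, which exists) the specialized weights become $\nu_L(\eta_1,\dots,\eta_L):=c^L\langle W|A_{\eta_1}\cdots A_{\eta_L}|V\rangle/\langle W|V\rangle$ with $A_1=D$, $A_0=E$, i.e.\ exactly $\mu^{fin}(\alpha,c,\gamma,-c;q)(\eta_1,\dots,\eta_L)$.

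The first step is to observe that the $\nu_L$ form a \emph{consistent} family: summing over $\eta_{L+1}$ and using $(A_0+A_1)|V\rangle=(D+E)|V\rangle=\tfrac1c|V\rangle$ gives $\sum_{\eta_{L+1}}\nu_{L+1}(\eta_1,\dots,\eta_L,\eta_{L+1})=\nu_L(\eta_1,\dots,\eta_L)$, and in particular $\sum_{\eta}\nu_1(\eta)=1$. Granting that the $\nu_L$ are nonnegative---this positivity is the price of the nonphysical specialization, and it should follow from the explicit combinatorial formula of Corteel and the second author \cite{CW3,CW4} for precisely those values of $c$ that can occur as a stationary current---the Kolmogorov extension theorem yields a probability measure $\mu_\infty$ on $X$ whose restriction to the first $L$ sites is $\nu_L$.

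The second step is to check $\LL^*\mu_\infty=0$. The generator $\LL$ of the semi-infinite ASEP consists of the left-boundary term at site $1$ and the bulk hopping terms over the bonds $(i,i+1)$, $i\ge1$; only the terms supported on $\{1,\dots,L\}$ together with the bond $(L,L+1)$ can alter $\eta_1,\dots,\eta_L$. Hence, on cylinder functions of $\eta_1,\dots,\eta_L$, the action of $\LL^*$ on $\mu_\infty$ coincides with the action on $\nu_L$ of the transposed generator of the \emph{finite} $L$-site ASEP (same $\alpha,\gamma$, and any $\beta,\delta$) after deleting the right-boundary term at site $L$ and adding the $(L,L+1)$ bond term. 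Taking $\beta=c$, $\delta=-c$, the finite $L$-site piece annihilates $\nu_L$ by the matrix ansatz, so it remains to show that the deleted right-boundary term equals the added bond term. A direct computation writes the bond-$(L,L+1)$ contribution for $\eta_L=1$ as $\tfrac{c^{L+1}}{\langle W|V\rangle}\langle W|A_{\eta_1}\cdots A_{\eta_{L-1}}(qED-DE)|V\rangle$; applying $DE-qED=D+E$ and then $(D+E)|V\rangle=\tfrac1c|V\rangle$ collapses this to $-c\,\nu_{L-1}(\eta_1,\dots,\eta_{L-1})$, which, by the consistency relation of the first step, is precisely the right-boundary contribution at $\beta=c$, $\delta=-c$; symmetrically for $\eta_L=0$. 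Thus the two pieces coincide and $\LL^*\mu_\infty=0$.

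Finally, the stationary current of $\mu_\infty$ across the bond $(1,2)$ is $\nu_2(1,0)-q\,\nu_2(0,1)=\tfrac{c^2}{\langle W|V\rangle}\langle W|(DE-qED)|V\rangle=\tfrac{c^2}{\langle W|V\rangle}\langle W|(D+E)|V\rangle=c$, so $\mu_\infty$ has current $c$. Since the stationary measures of the semi-infinite ASEP are parametrized by their current (Liggett \cite{Liggett}, Grosskinsky \cite{G}), $\mu_\infty$ must be the stationary measure of current $c$, and its finite correlation functions are the $\nu_L=\mu^{fin}(\alpha,c,\gamma,-c;q)$, which proves the theorem. I expect the main obstacle to be establishing the positivity of the specialized finite-ASEP weights and pinning down exactly which $c$ are admissible---that is, giving rigorous meaning to the ``nonphysical'' boundary rate $\delta=-c<0$; the algebraic heart, namely the coincidence of the right-boundary term with the bond-$(L,L+1)$ term, is forced cleanly by the bulk relation $DE-qED=D+E$ together with the eigenvector equation $(D+E)|V\rangle=\tfrac1c|V\rangle$. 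Convergence of the matrix products and the existence of a representation with $\langle W|V\rangle\neq0$ are routine.
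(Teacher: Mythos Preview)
Your algebraic core coincides with the paper's: the specialization $\beta=c$, $\delta=-c$ turns the right-boundary relation of the Derrida--Evans--Hakim--Pasquier ansatz into the eigenvector equation $(D+E)|V\rangle=\tfrac1c|V\rangle$, and this is exactly what makes the bond-$(L,L+1)$ contribution match the removed right-boundary term. Your computation of that match is correct and is in fact a slight streamlining of the paper's argument, since you invoke the finite-ASEP stationarity as a black box and only have to check the boundary correction, whereas the paper proves from scratch a stand-alone ``matrix ansatz for the semi-infinite ASEP'' (its Theorem~\ref{semi-ansatz}), i.e.\ verifies $\E^\mu(\LL f)=0$ directly from relations $DE-qED=c(D+E)$, $\alpha\langle W|E-\gamma\langle W|D=c\langle W|$, $(D+E)|V\rangle=|V\rangle$, and then simply observes (Proposition~\ref{measures}) that after rescaling $D\mapsto cD$, $E\mapsto cE$ the specialized finite-ansatz relations become these.

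Where you diverge from the paper is in the framing around that algebra. You pass through Kolmogorov extension (hence need positivity of the $\nu_L$) and then appeal to a classification of stationary measures by current to identify $\mu_\infty$. The paper needs neither step: it works throughout with \emph{signed} measures specified only by their finite-dimensional marginals, and its Theorem~\ref{semi-ansatz} \emph{defines} ``the stationary (signed) measure of current $c$'' via the ansatz, so the identification is immediate once the relations match. Positivity is treated separately (Theorem~\ref{positivity}) and is not an ingredient in the proof of Theorem~\ref{mainthm}. So your acknowledged obstacle---positivity for Kolmogorov extension---is not actually an obstacle for the theorem as stated; and your appeal to a Liggett/Grosskinsky uniqueness-by-current result is an extra hypothesis the paper does not use and which would need separate justification in the present generality.
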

This theorem is illustrated in Figure \ref{Models2}.
\begin{figure}[h]
\centering
\includegraphics[height=.56in]{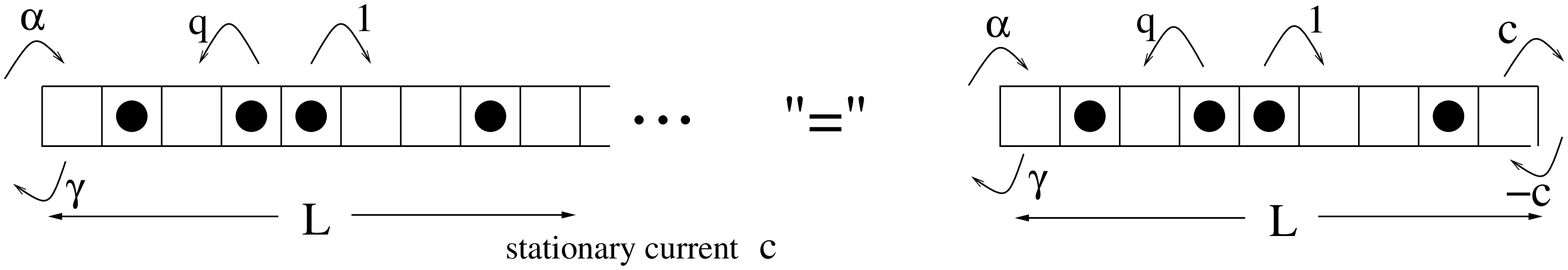}
\caption{}
\label{Models2}
\end{figure}

By combining  Theorem \ref{mainthm} with 
work of the second author and Corteel \cite{CW3, CW4}
we can  give a combinatorial formula for the 
finite correlation functions for the stationary measures
for the 
semi-infinite ASEP.  
Before stating the result,
we first introduce the {\it staircase tableaux} from
\cite{CW3, CW4}.

\begin{definition}
A \emph{staircase tableau} of size $L$ is a Young diagram of ``staircase"
shape $(L, L-1, \dots, 2, 1)$ such that boxes are either empty or
labeled with $\alpha, \beta, \gamma$, or $\delta$, subject to the following conditions:
\begin{itemize}
\item no box along the diagonal is empty;
\item all boxes in the same row and to the left of a $\beta$ or a $\delta$ are empty;
\item all boxes in the same column and above an $\alpha$ or a $\gamma$ are empty.
\end{itemize}
The \emph{type} of a staircase tableau is a word in $\{\circ, \bullet\}^L$
obtained by reading the diagonal boxes from
northeast to southwest and writing a $\bullet$ for each $\alpha$ or $\delta$,
and a $\circ$ for each $\beta$ or $\gamma$.
\end{definition}

See the left of Figure \ref{staircase} for an example of a staircase tableau.
\begin{figure}[h]
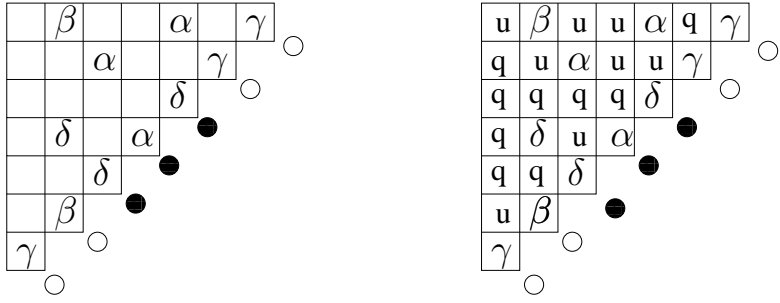

\input{Staircase.pstex_t} \hspace{6em} \input{Staircase2.pstex_t}
\caption{A staircase tableau
of size $7$ and type $\circ \circ \bullet \bullet \bullet \circ \circ$}
\label{staircase}
\end{figure}


\begin{definition}\label{weight}
The \emph{weight} $\wt(\T)$ of a staircase tableau $\T$ 
is a monomial in
$\alpha, \beta, \gamma, \delta, q$, and $u$, which we obtain as follows.
Every blank box of $\T$ is assigned a $q$ or $u$, based on the label of the closest
labeled box to its right in the same row and the label of the closest labeled box
below it in the same column, such that:
\begin{itemize}
\item every blank box which sees a $\beta$ to its right gets assigned a $u$;
\item every blank box which sees a $\delta$ to its right gets assigned a $q$;
\item every blank box which sees an $\alpha$ or $\gamma$ to its right,
  and an $\alpha$ or $\delta$ below it, gets assigned a $u$;
\item every blank box which sees an $\alpha$ or $\gamma$ to its right,
 and a $\beta$ or $\gamma$ below it, gets assigned a $q$.
\end{itemize}
After assigning a $q$ or $u$ to each blank box in this way, the
\emph{weight} of $\T$ is then defined as the product of all labels in all boxes.
\end{definition}

The right of Figure \ref{staircase}
shows that this staircase tableau
has weight $\alpha^3 \beta^2 \gamma^3 \delta^3 q^9 u^8$.

\begin{remark}
The weight of a staircase tableau of size $L$
has degree $L(L+1)/2$.  We will typically set $u=1$.
Keeping $u$ general
corresponds to particles in the bulk hopping right at rate $u$
instead of $1$.
\end{remark}

The following result, concerning
the stationary distribution of the finite
ASEP, was announced in \cite{CW3} and proved
in \cite{CW4}.

\begin{theorem}\label{OldThm}\cite{CW3, CW4}
Consider any state $\tau$ of
the finite ASEP with $L$ sites, where the parameters
$\alpha, \beta, \gamma, \delta, q, u$ are general.
Set $Z_L^{fin} = \sum_{\T} \wt(\T)$,
where the sum is over all staircase tableaux of size $L$;
we call $Z_L^{fin}$ the \emph{partition function} of the finite ASEP.
Then the steady state probability that the
ASEP is at state $\tau$ is
precisely
\begin{equation*}
\frac{\sum_{\T} \wt(\T)}{Z_L^{fin}},
\end{equation*}
where the sum is over all staircase tableaux $\T$ of type $\tau$.
\end{theorem}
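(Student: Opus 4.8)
The plan is to deduce Theorem~\ref{OldThm} from the Matrix Ansatz of Derrida, Evans, Hakim and Pasquier, which writes the stationary weights of the finite ASEP in matrix-product form. Normalizing the bulk right-hop rate to $u=1$ (the general case being analogous), it suffices to produce linear operators $D$ and $E$ on some vector space, a row vector $\langle W|$, and a column vector $|V\rangle$ with $\langle W|V\rangle\neq 0$, such that
\begin{align*}
DE - qED &= D+E,\\
\langle W|(\alpha E - \gamma D) &= \langle W|,\\
(\beta D - \delta E)|V\rangle &= |V\rangle.
\end{align*}
Given such data, the Matrix Ansatz guarantees that the steady-state probability of a state $\tau$ is $\langle W|\,X_{\tau_1}\cdots X_{\tau_L}\,|V\rangle$ divided by $\langle W|(D+E)^L|V\rangle$, where $X_\bullet=D$ and $X_\circ=E$. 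Since each staircase tableau has a unique type, summing over the $2^L$ words $\tau$ would then identify $\langle W|(D+E)^L|V\rangle$ with $Z_L^{fin}=\sum_{\T}\wt(\T)$, so the whole theorem comes down to the single identity
\begin{equation*}
\langle W|\,X_{\tau_1}\cdots X_{\tau_L}\,|V\rangle \;=\; \sum_{\T\text{ of type }\tau}\wt(\T).
\end{equation*}

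To build the operators I would work in the space with basis $\{|n\rangle : n\geq 0\}$ and define $D$ and $E$ as transfer matrices that assemble a staircase tableau one diagonal box at a time, the index $n$ recording the relevant statistic of the partially-built tableau (morally, the number of columns not yet ``closed off'' by an $\alpha$ or a $\gamma$). The matrix entry $\langle m|D|n\rangle$, respectively $\langle m|E|n\rangle$, should be the generating polynomial in $\alpha,\beta,\gamma,\delta,q$ for the ways of adjoining a new diagonal box labeled $\alpha$ or $\delta$, respectively $\beta$ or $\gamma$, together with the filling (by $q$'s and $u$'s) of the newly created row that is forced by Definition~\ref{weight} and by the emptiness rules for staircase tableaux; the vectors $\langle W|$ and $|V\rangle$ encode the two ends of the construction. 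Each of $D,E$ sends a basis vector to a finite combination of basis vectors, so all products below are well-defined. With the bookkeeping set up correctly, expanding $\langle W|X_{\tau_1}\cdots X_{\tau_L}|V\rangle$ over the intermediate indices enumerates exactly the staircase tableaux of type $\tau$ with the correct weights, so the displayed identity holds essentially by design.

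The substance of the argument is then the verification of the three algebraic relations, and I expect the bulk relation $DE-qED=D+E$ to be the main obstacle. Read combinatorially it is a weight-preserving correspondence between the two orders in which two consecutive diagonal boxes can be inserted, and establishing it calls for a careful case analysis following Definition~\ref{weight} --- whether a blank box sees a $\beta$ or a $\delta$ to its right, and in the remaining cases whether it sees an $\alpha$ or $\delta$ versus a $\beta$ or $\gamma$ below --- with the coefficient $q$ on the left accounting exactly for the blank boxes that are assigned a $q$ rather than a $u$. The two boundary relations are of the same nature but appreciably easier, as they involve only the extreme diagonal box and the top (respectively bottom) row. Once the three relations together with $\langle W|V\rangle\neq 0$ are in hand, the Matrix Ansatz delivers the theorem; this is the route taken in~\cite{CW4}, whose results were announced in~\cite{CW3}.
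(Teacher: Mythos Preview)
The paper does not supply its own proof of Theorem~\ref{OldThm}; it simply quotes the result from \cite{CW3, CW4}, noting that it ``was announced in \cite{CW3} and proved in \cite{CW4}.'' So there is no in-paper argument to compare against.

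Your outline is the correct strategy and is indeed the one carried out in \cite{CW4}: build explicit operators $D$, $E$ and boundary vectors $\langle W|$, $|V\rangle$ as transfer matrices whose matrix products enumerate staircase tableaux by weight, and then verify the three relations of the Matrix Ansatz (Theorem~\ref{ansatz}). What you have written, however, is a plan rather than a proof. You do not actually specify the state space or the entries of $D$ and $E$, and the heart of the matter---the bulk relation $DE-qED=D+E$---is acknowledged only as something that ``calls for a careful case analysis'' without being carried out. The boundary relations and the nonvanishing $\langle W|V\rangle\neq 0$ likewise need explicit checking once the objects are defined. In \cite{CW4} these verifications are done in full, and the bulk relation is, as you anticipate, the most delicate step. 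If you were to write down the transfer matrices precisely and complete those case checks, your argument would coincide with the one in the cited reference; as it stands it is a faithful sketch of that route but not a self-contained proof.
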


\begin{example}
Figure \ref{tableaux} illustrates Theorem \ref{OldThm}
for the state $\bullet \bullet$
of the ASEP.  All staircase tableaux $\T$
of type $\bullet \bullet$ are shown.
It follows that the steady state probability of  $\bullet \bullet$ is
\begin{equation*}
\frac{\alpha^2 u + \delta^2 q +\alpha \delta q  +\alpha \delta u +
\alpha^2 \delta+ \alpha \beta \delta +\alpha \gamma \delta+\alpha \delta^2}
{Z_2^{fin}}.
\end{equation*}

\begin{figure}[h]
\input{Example.pstex_t}
\caption{The tableaux of type $\bullet \bullet$}
\label{tableaux}
\end{figure}
\end{example}

\begin{remark}\label{rem:Zsemi}
In \cite{CSSW}, the second author together with Corteel, Stanley, and Stanton, 
studied staircase tableaux and their generating function
$Z_L^{fin}(\alpha,\beta,\gamma,\delta;q)$ from a combinatorial point of view.
In particular, Table 1 of \cite{CSSW} lists various specializations
of $Z_L^{fin}(\alpha,\beta,\gamma,\delta;q)$.
The third row of the table
shows that $$Z_L^{fin}(\alpha, \beta, \gamma, -\beta;q)=
\prod_{j=0}^{L-1}(\alpha+q^j \gamma).$$  Note that 
despite the fact that the specialization is nonphysical (we have made the hopping rate
$\delta$ a negative real number), the resulting 
quantity is positive.  Also, the resulting quantity has no dependence
on $\beta$ and $\delta$.
\end{remark}

\begin{corollary}\label{comb}
Consider the semi-infinite ASEP, with  parameters
$\alpha, \gamma, q, c$.
Set $Z_L^{semi} = \prod_{j=0}^{L-1} (\alpha+q^j \gamma)$.
Then the 
correlation function $\langle \eta_1 \dots \eta_L \rangle$
for the stationary measure
is precisely
\begin{equation}\label{eq:corr}
\frac{\sum_{\T} \wt(\T)|_{u=1,\beta=c,\delta=-c}}{Z_L^{semi}},
\end{equation}
where the sum is over all staircase tableaux $\T$ of type $(\eta_1,\dots,\eta_L)$.
\end{corollary}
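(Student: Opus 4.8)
The plan is to deduce Corollary~\ref{comb} as a formal consequence of the three facts already in hand: Theorem~\ref{mainthm}, Theorem~\ref{OldThm}, and the partition-function specialization recorded in Remark~\ref{rem:Zsemi}. All of the substantive work lives in those statements (and in \cite{CW3, CW4, CSSW}); the corollary is essentially a chain of substitutions. Write $\tau = (\eta_1,\dots,\eta_L)$ for the state under consideration. The first step is to invoke Theorem~\ref{mainthm}, which identifies the correlation function of the stationary measure of the semi-infinite ASEP with current $c$ as the nonphysical specialization of the finite-ASEP steady-state probability:
\[
\langle \eta_1 \dots \eta_L \rangle \;=\; \mu^{fin}(\alpha, c, \gamma, -c; q)(\tau),
\]
i.e.\ the value obtained by setting $\beta = c$ and $\delta = -c$ in the steady-state probability of $\tau$ for the finite ASEP on $L$ sites.

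The second step is to expand that finite-ASEP probability via Theorem~\ref{OldThm}. Fixing the standing convention $u = 1$, Theorem~\ref{OldThm} gives
\[
\mu^{fin}(\alpha,\beta,\gamma,\delta;q)(\tau) \;=\; \frac{\sum_{\T \text{ of type }\tau} \wt(\T)|_{u=1}}{Z_L^{fin}(\alpha,\beta,\gamma,\delta;q)},
\qquad Z_L^{fin}(\alpha,\beta,\gamma,\delta;q) = \sum_{\T} \wt(\T)|_{u=1},
\]
the first sum over staircase tableaux of type $\tau$ and the second over all staircase tableaux of size $L$. Both sums are polynomials in $\alpha,\beta,\gamma,\delta,q$, so the right-hand side is a well-defined rational function of these parameters, and "setting $\beta=c$, $\delta=-c$" means specializing that rational function. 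Under this specialization the numerator becomes $\sum_{\T\text{ of type }\tau}\wt(\T)|_{u=1,\,\beta=c,\,\delta=-c}$, which is exactly the numerator of \eqref{eq:corr}; and by the third row of Table~1 of \cite{CSSW}, quoted in Remark~\ref{rem:Zsemi}, the denominator becomes
\[
Z_L^{fin}(\alpha, c, \gamma, -c; q) \;=\; \prod_{j=0}^{L-1}(\alpha + q^j \gamma) \;=\; Z_L^{semi}.
\]
Chaining the last three displays together gives precisely \eqref{eq:corr}.

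The one point that genuinely needs a word of justification — and the closest thing here to an obstacle — is that the nonphysical specialization $\delta = -c$ is legitimate, i.e.\ that we are not dividing by zero when we substitute into $\bigl(\sum_{\T\text{ of type }\tau}\wt(\T)\bigr)/Z_L^{fin}$. This is guaranteed by Remark~\ref{rem:Zsemi}: the specialized partition function equals the manifestly positive product $\prod_{j=0}^{L-1}(\alpha + q^j\gamma)$, which is nonzero unless $\alpha = \gamma = 0$ (in which case there is no dynamics at the left boundary and the statement is vacuous or handled as a degenerate limit). Two minor bookkeeping remarks complete the argument: first, the state/word conventions agree, since the type of a staircase tableau — read along the diagonal from northeast to southwest, with $\bullet$ for a particle and $\circ$ for a hole — is exactly the state $\tau = (\eta_1,\dots,\eta_L)$ appearing in Theorem~\ref{OldThm}, so "tableaux of type $\tau$" denotes the same family in both places; and second, although after the specialization individual tableaux may contribute with negative sign (whenever an odd power of $\delta$ occurs), the total $\sum_{\T\text{ of type }\tau}\wt(\T)|_{u=1,\,\beta=c,\,\delta=-c}/Z_L^{semi}$ is a bona fide probability, being equal to $\langle\eta_1\dots\eta_L\rangle$ by the computation above.
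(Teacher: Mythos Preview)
Your proof is correct and follows exactly the same route as the paper: the paper's own proof consists of the single sentence ``Corollary~\ref{comb} follows from Theorem~\ref{mainthm}, Theorem~\ref{OldThm}, and Remark~\ref{rem:Zsemi},'' and you have simply unpacked that chain of substitutions with some additional care about well-definedness of the specialization.
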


\begin{proof}
Corollary \ref{comb} follows from Theorem \ref{mainthm}, Theorem \ref{OldThm}, and 
Remark \ref{rem:Zsemi}.
\end{proof}


\begin{remark}
For the finite ASEP these correlation functions could be written
as polynomials in the parameters with all coefficients being positive
(divided by a normalization factor).  However, for the semi-infinite
ASEP, since we use the substition $\delta=-c$ in Corollary \ref{comb}, 
this positivity property of the coefficients no longer holds.

Nevertheless, in Theorem \ref{positivity} we will provide
a sufficient condition for the quantities 
in \eqref{eq:corr} to be positive real numbers.
\end{remark}

\begin{example}
We can use 
Corollary \ref{comb} 
and the tableaux of Figure \ref{tableaux}  
to compute the correlation function $\langle \eta_1 \eta_2 \rangle=
\langle 1 1 \rangle$.  Setting
$u=1,\beta=c,\delta=-c$ gives 
\begin{equation*}
\frac{\alpha^2  + c^2 q - \alpha c q  -\alpha c -
\alpha^2 c- \alpha c^2 - \alpha \gamma c+\alpha c^2}
{(\alpha+\gamma)(\alpha+q\gamma)} = 
\frac{\alpha^2  + c^2 q - \alpha c q  -\alpha c -
\alpha^2 c - \alpha \gamma c}
{(\alpha+\gamma)(\alpha+q\gamma)}.
\end{equation*}
\end{example}

\begin{remark}\label{partition}
The partition function  $Z_L^{semi}$ for 
the finite correlation functions involving the first $L$
sites on the semi-infinite ASEP is 
$\prod_{j=0}^{L-1}(\alpha+q^j \gamma)$.
In particular, this does not depend on $c$.
\end{remark}

The structure of this paper is as follows.  In 
Section \ref{sec:finite} we review some results on the finite ASEP,
then in Section \ref{semi-infinite} we 
define the ASEP on a semi-infinite lattice.  
In Section \ref{semiansatz} we state and prove a matrix ansatz describing
the finite correlation functions of its (signed) stationary measures,
and in Section \ref{sec:positivity} we provide a sufficient condition
for 
these signed stationary measures to be positive.
In Section \ref{proof} we 
prove Theorem \ref{mainthm}.  And in Section \ref{conclusion} we summarize our results
and end with some questions about ``nonsensical" specializations of Markov chains.

\textsc{Acknowledgments:} The authors are grateful to A. Borodin, P. Deift, 
P.L. Ferrari, S. Grosskinsky, and H. Spohn for useful discussions. 

\section{Background on the finite ASEP and its matrix ansatz} \label{sec:finite}

We start by recalling the Matrix Ansatz of Derrida, Evans, Hakim, and Pasquier
\cite{Derrida1} for the finite ASEP,  as well as
results of 
the first author together with Uchiyama and Wadati \cite{USW} on the current.

\begin{theorem} \cite{Derrida1} \label{ansatz}
Suppose that there are matrices $D$, $E$ and vectors
$\langle W |$, $|V\rangle$, which satisfy
\begin{enumerate}
\item[(A.)] $DE-qED=D+E$
\item[(B.)] $\alpha \langle W|E-\gamma \langle W|D = \langle W|$
\item[(C.)] $\beta D|V\rangle - \delta E|V\rangle = |V\rangle.$
\end{enumerate}
Let $\eta = (\eta_1,\dots,\eta_L)$ be a state of the finite ASEP.
Then the measure $\mu$ defined by 
$$\mu(\eta) = 
\frac{\langle W| \prod_{x=1}^L \eta_x D+(1-\eta_x) E|V\rangle}
{\langle W| (D+E)^L | V\rangle}$$
is the unique stationary measure for the ASEP on a finite
lattice of $L$ sites, where the rates of particles entering and exiting
at the left are $\alpha$ and $\gamma$, and the rates of particles
exiting and entering at the right are $\beta$ and $\delta$.
\end{theorem}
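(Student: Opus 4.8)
The plan is to reconstruct the argument of Derrida, Evans, Hakim, and Pasquier. The finite ASEP is an irreducible continuous-time Markov chain on the finite state space $\{0,1\}^L$ (under the standing assumption that the boundary rates are such that every state communicates with every other), so it has a unique stationary measure, and it is enough to verify that the proposed $\mu$ satisfies the master equation. I would write $A_x = \eta_x D + (1-\eta_x) E$ and set $f(\eta) = \langle W| A_1 \cdots A_L |V\rangle$ for the unnormalized candidate; expanding $(D+E)^L = \sum_\eta A_1 \cdots A_L$ shows $\sum_\eta f(\eta) = \langle W|(D+E)^L|V\rangle$, so it suffices to prove that the Markov generator $\mathcal{L}$ annihilates $f$. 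I would then decompose $\mathcal{L} = \sum_{x=1}^{L-1} m_{x,x+1} + m_1 + m_L$ into the bulk bond operators and the left and right boundary operators, and compute $(m\,f)(\eta)$ for each summand separately.

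For a bulk bond $(x,x+1)$ the only moves are $10 \to 01$ at rate $1$ and $01 \to 10$ at rate $q$, so after pulling out the matrices outside positions $x$ and $x+1$ the quantity $(m_{x,x+1} f)(\eta)$ amounts to replacing the block $A_x A_{x+1}$ by $\varepsilon\,(DE - qED)$ for a sign $\varepsilon = \pm 1$, which by relation (A) is $\varepsilon\,(D+E)$. The key point is that this is a telescoping difference: setting $\bar D = \mathbf{1}$, $\bar E = -\mathbf{1}$ and $\bar A_x = \eta_x \bar D + (1-\eta_x)\bar E = (2\eta_x - 1)\mathbf{1}$, relation (A) is precisely the identity
\[
(m_{x,x+1} f)(\eta) = \langle W| A_1 \cdots A_{x-1} A_x \bar A_{x+1} A_{x+2} \cdots A_L |V\rangle - \langle W| A_1 \cdots A_{x-1} \bar A_x A_{x+1} A_{x+2} \cdots A_L |V\rangle ,
\]
checked case by case on $(\eta_x,\eta_{x+1}) \in \{00,01,10,11\}$. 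Since each $\bar A_j$ is a scalar multiple of the identity, the right-hand side equals $(2\eta_{x+1}-1)\,f^{(x+1)}(\eta) - (2\eta_x-1)\,f^{(x)}(\eta)$, where $f^{(j)}(\eta) = \langle W| A_1 \cdots A_{j-1} A_{j+1} \cdots A_L |V\rangle$ is the weight with site $j$ deleted. Summing over $x = 1, \ldots, L-1$ collapses to $(2\eta_L-1)\,f^{(L)}(\eta) - (2\eta_1-1)\,f^{(1)}(\eta)$.

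It then remains to check that the two boundary operators cancel exactly this remainder. At site $1$, where particles enter at rate $\alpha$ and exit at rate $\gamma$, a short computation gives $(m_1 f)(\eta) = (2\eta_1 - 1)\,\langle W|(\alpha E - \gamma D) A_2 \cdots A_L |V\rangle$, which by relation (B) is $(2\eta_1 - 1)\,f^{(1)}(\eta)$; symmetrically, relation (C) gives $(m_L f)(\eta) = -(2\eta_L - 1)\,f^{(L)}(\eta)$. Adding the three contributions yields $(\mathcal{L} f)(\eta) = 0$ for every $\eta$, so $\mu = f / \langle W|(D+E)^L|V\rangle$ is the stationary measure, and uniqueness follows from irreducibility. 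Once the telescoping device (equivalently, the choice of auxiliary matrices $\bar D = \mathbf{1}$, $\bar E = -\mathbf{1}$) is in hand, the argument is almost entirely bookkeeping; the points that demand care are the reduction to uniqueness — one must invoke irreducibility and the nonvanishing of $\langle W|(D+E)^L|V\rangle$ under the standing assumptions on the rates — and keeping the signs in the two boundary terms consistent with the signs produced by the bulk telescoping.
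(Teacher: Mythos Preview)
The paper does not supply its own proof of this theorem: it is stated in Section~\ref{sec:finite} as background, with attribution to Derrida--Evans--Hakim--Pasquier \cite{Derrida1}, and no argument is given. Your proposal correctly reconstructs the standard DEHP proof, and the telescoping computation with the auxiliary scalars $\bar D = \mathbf{1}$, $\bar E = -\mathbf{1}$ checks out in all four bulk cases and at both boundaries. One minor point of language: when you say ``the Markov generator $\mathcal{L}$ annihilates $f$,'' you are really verifying the master equation $\mathcal{L}^* f = 0$ for the adjoint acting on measures; the paper reserves $\mathcal{L}$ for the action on cylinder functions, so it would be cleaner to keep that distinction explicit.

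It is worth noting that the paper \emph{does} prove the closely analogous Theorem~\ref{semi-ansatz} for the semi-infinite model, and there the authors organize the same telescoping differently: rather than introducing the auxiliary $\bar A_x$, they group the configuration into maximal runs (``blocks'') of $\bullet$'s and $\circ$'s and observe that the bulk relation contributes only at block boundaries, yielding a telescoping sum over blocks. The two bookkeeping devices are equivalent, but yours is slicker for the finite case because it handles all four local patterns uniformly and makes the cancellation with the boundary terms transparent.
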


Although Theorem \ref{ansatz} was published in 1993, it was not until 
ten years later that 
a general solution to the ansatz was obtained.
\begin{theorem}\cite[Section 4.2]{USW}\label{USWsolution}
There is a solution $D$, $E$, $\langle W|$, $|V\rangle$
which satisfies the relations of Theorem \ref{ansatz}.
\end{theorem}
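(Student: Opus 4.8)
The plan is to exhibit an explicit solution built from the $q$-deformed harmonic oscillator and its Fock representation, as in the theory surrounding Askey--Wilson polynomials.

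First I would dispose of relation (A.). Setting $D = \tfrac{1}{1-q}\bigl(1 + \sqrt{1-q}\,\mathbf{a}\bigr)$ and $E = \tfrac{1}{1-q}\bigl(1 + \sqrt{1-q}\,\mathbf{a}^{\dagger}\bigr)$, a direct computation shows that (A.) holds if and only if the $q$-boson relation $\mathbf{a}\mathbf{a}^{\dagger} - q\,\mathbf{a}^{\dagger}\mathbf{a} = 1$ holds. This relation has the standard realization on the space with orthonormal basis $\{|n\rangle\}_{n\ge 0}$, where $\mathbf{a}^{\dagger}|n\rangle = \sqrt{[n+1]_q}\,|n+1\rangle$ and $\mathbf{a}|n\rangle = \sqrt{[n]_q}\,|n-1\rangle$ with $[n]_q = \tfrac{1-q^n}{1-q}$; under it $D$ and $E$ become explicit semi-infinite tridiagonal matrices.

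Next I would rewrite the boundary conditions (B.) and (C.) in this representation. A short manipulation turns (B.) into the statement that $\langle W| = \sum_{n\ge 0} w_n\langle n|$ is a left eigenvector of $\alpha\mathbf{a}^{\dagger} - \gamma\mathbf{a}$ for a specific eigenvalue depending on $\alpha,\gamma,q$, and turns (C.) into the statement that $|V\rangle = \sum_{n\ge 0} v_n|n\rangle$ is a right eigenvector of $\beta\mathbf{a} - \delta\mathbf{a}^{\dagger}$ for a specific eigenvalue depending on $\beta,\delta,q$. Each of these is equivalent to a three-term recurrence for the coefficients $w_n$ (resp.\ $v_n$), which I would solve in closed form; the solutions are, up to normalization, values of Al-Salam--Chihara polynomials --- a specialization of Askey--Wilson polynomials --- in variables determined by $(\alpha,\gamma)$ and $(\beta,\delta)$. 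Equivalently, one can encode $w_n$ and $v_n$ by generating functions solving elementary $q$-difference equations. With these $w_n$, $v_n$ in hand, together with the matrices $D$, $E$ above, all of (A.), (B.), (C.) hold.

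The part I expect to be the main obstacle is not the algebra but the \emph{analytic} consistency: one must check that the bra and ket just produced pair nontrivially and finitely, i.e.\ that $\langle W|\prod_{x=1}^{L}\bigl(\eta_x D + (1-\eta_x)E\bigr)|V\rangle$, and in particular the normalization $\langle W|(D+E)^L|V\rangle$, are well-defined, convergent, and nonzero. Since $D+E$ is tridiagonal and, for generic parameter values, $w_n$ and $v_n$ decay geometrically in $n$, each such matrix element is an absolutely convergent sum; the cleanest way to see convergence and nonvanishing of the normalization simultaneously is to recognize $\langle W|f(D+E)|V\rangle$ as an integral of $f$ against the (positive) Askey--Wilson orthogonality measure attached to these polynomials. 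Collecting these ingredients produces the solution asserted in the theorem.
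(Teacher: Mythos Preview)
The paper does not supply its own proof of this theorem; it is stated purely as a citation of \cite[Section 4.2]{USW}. Your proposal is correct and is essentially the construction carried out in that reference: the substitution $D=\tfrac{1}{1-q}(1+\sqrt{1-q}\,\mathbf{a})$, $E=\tfrac{1}{1-q}(1+\sqrt{1-q}\,\mathbf{a}^{\dagger})$ on the $q$-Fock space reduces (A.) to the $q$-boson relation, the boundary vectors are obtained from three-term recurrences solved by Al-Salam--Chihara (i.e.\ specialized Askey--Wilson) polynomials, and the well-definedness and nonvanishing of $\langle W|(D+E)^L|V\rangle$ follow from the Askey--Wilson orthogonality measure. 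There is nothing further to compare.
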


The above solution was related to Askey-Wilson polynomials.  Using 
properties of the 
Askey-Wilson integral, the authors calculated the 
\emph{current} $J_L$ of the finite ASEP.
Recall that  
$J_L = \frac{Z_{L-1}}{Z_L}$, where 
$Z_L = \langle W| (D+E)^L|V\rangle$.
Let $J = \lim_{L \to \infty} J_L$.

\begin{proposition}\cite[(6.5), (6.8), (6.11)]{USW}\label{prop:current}
Suppose that $q \neq 1$.  Let \begin{align*}
a&=\frac{1-q-\alpha+\gamma+\sqrt{(1-q-\alpha+\gamma)^2+4\alpha\gamma}}{2\alpha}
\text{ and }\\
b&=\frac{1-q-\beta+\delta+\sqrt{(1-q-\beta+\delta)^2+4\beta\delta}}{2\beta}.
\end{align*}
\begin{enumerate}
\item If $a>1$ and $a>b$ then 
  $J 
  =(1-q) \frac{a}{(1+a)^2}$.
\item If $a<1$ and $b<1$ then 
   $J = \frac{1-q}{4}$.
\item If $b>1$ and $b>a$ then 
  $J = (1-q) \frac{b}{(1+b)^2}.$
\end{enumerate}
\end{proposition}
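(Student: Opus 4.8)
The plan is to follow \cite{USW}: realize the solution of Theorem~\ref{USWsolution} in terms of the Askey--Wilson $q$-difference operator, so that $Z_L=\langle W|(D+E)^L|V\rangle$ becomes an Askey--Wilson integral, and then extract the exponential growth rate of $Z_L$ as $L\to\infty$. Since $J_L=Z_{L-1}/Z_L$, the current $J$ is the reciprocal of that growth rate, and in particular the limit defining $J$ exists.

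The first step is to set up the explicit solution. In the realization of \cite[Section~4.2]{USW} the underlying space is spanned by the Askey--Wilson polynomials in the variable $x=\cos\theta$, with the four Askey--Wilson parameters
\[
a=\kappa_{+}(\alpha,\gamma),\qquad c=\kappa_{-}(\alpha,\gamma),\qquad b=\kappa_{+}(\beta,\delta),\qquad d=\kappa_{-}(\beta,\delta),
\]
where $\kappa_{\pm}(x,y)=\tfrac{1}{2x}\bigl(1-q-x+y\pm\sqrt{(1-q-x+y)^2+4xy}\,\bigr)$; thus $a$ and $b$ are exactly the quantities occurring in the statement. The structural fact that drives everything is that $D+E$ acts as multiplication by the linear function $\ell(x)=\tfrac{2(1+x)}{1-q}$ of the Askey--Wilson variable, so that for every polynomial $P$,
\[
\langle W|\,P(D+E)\,|V\rangle=\int P\bigl(\ell(x)\bigr)\,d\nu(x),
\]
where $\nu=\nu_{a,b,c,d}$ is a fixed (that is, $L$-independent) positive multiple of the Askey--Wilson orthogonality measure with parameters $a,b,c,d$. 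Hence $Z_L=\int \ell(x)^L\,d\nu(x)$, and $\nu$ decomposes as an absolutely continuous part supported on $[-1,1]$ (whose density vanishes quadratically at the endpoint $x=1$) together with a finite sum of point masses $m_k>0$ located at $x=\tfrac{1}{2}(yq^j+y^{-1}q^{-j})$, $j\ge0$, for each $y\in\{a,b\}$ with $y>1$ and $yq^j>1$; the remaining parameters $c=\kappa_{-}(\alpha,\gamma)$, $d=\kappa_{-}(\beta,\delta)$ are negative reals of modulus $<1$ and contribute no mass point. Deriving this integral representation from the explicit $q$-difference operators is the technical heart of the matter.

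The second step is the asymptotics of $Z_L$. On $[-1,1]$ the function $\ell$ is increasing and attains its maximum $\ell(1)=\tfrac{4}{1-q}$ at the endpoint $x=1$; combined with the quadratic vanishing of the density there, a standard Laplace-type estimate shows the absolutely continuous part of $Z_L$ is $\sim\mathrm{const}\cdot L^{-3/2}\bigl(\tfrac{4}{1-q}\bigr)^L$. A point mass coming from $y\in\{a,b\}$, $y>1$, contributes $m_k\,\ell(x_k)^L$, and since $\ell$ is linear the largest such contribution is at $j=0$, with $\ell\bigl(\tfrac{1}{2}(y+y^{-1})\bigr)=\tfrac{2+y+y^{-1}}{1-q}=\tfrac{(1+y)^2}{(1-q)y}$. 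Because $g(t):=\tfrac{(1+t)^2}{t}=t+2+t^{-1}$ satisfies $g(t)>g(1)=4$ for all $t>0$ with $t\ne1$ and is strictly increasing on $(1,\infty)$, the growth rate $\lambda=\lim_L Z_L^{1/L}$ equals $\tfrac{(1+a)^2}{(1-q)a}$ if $a>1$ and $a>b$; it equals $\tfrac{(1+b)^2}{(1-q)b}$ if $b>1$ and $b>a$; and it equals $\tfrac{4}{1-q}$ if $a<1$ and $b<1$. In every regime $Z_L\sim C\lambda^L L^{\kappa}$ with $\kappa\in\{0,-\tfrac{3}{2}\}$, whence $J_L=Z_{L-1}/Z_L=\lambda^{-1}(1-1/L)^{\kappa}(1+o(1))\to\lambda^{-1}$ --- precisely the three formulas of Proposition~\ref{prop:current}.

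The main obstacle is the first step: checking that in the \cite{USW} solution $D+E$ really is multiplication by $\tfrac{2(1+x)}{1-q}$ and that $\langle W|\cdot|V\rangle$ is a positive multiple of the Askey--Wilson functional with parameters $\kappa_{\pm}(\alpha,\gamma),\kappa_{\pm}(\beta,\delta)$ --- including convergence of the relevant $q$-products and integrals for general parameters rather than only physical ones. A secondary point is to justify the contour deformation that isolates the point-mass contributions once $a>1$ or $b>1$, and to confirm that the subexponential prefactors have the form claimed above, so that $Z_{L-1}/Z_L$ converges rather than merely having $\liminf<\limsup$. The hypothesis $q\ne1$ (with $0\le q<1$ implicit throughout) is what keeps $\ell$ finite and nondegenerate and the $q$-products convergent; the non-generic coexistence cases $a=b>1$ lie outside the statement and would need a separate two-mass-point analysis.
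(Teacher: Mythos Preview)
The paper does not prove this proposition; it is quoted from \cite{USW} with equation references and used as a black box.  So there is no ``paper's own proof'' to compare against beyond the citation itself.  Your sketch is a faithful outline of the argument in \cite{USW}: realize $D+E$ as multiplication by $\ell(x)=\tfrac{2(1+x)}{1-q}$ in the Askey--Wilson representation, write $Z_L$ as the $L$th moment of $\ell$ against the Askey--Wilson orthogonality measure with parameters $(a,b,c,d)=(\kappa_\pm(\alpha,\gamma),\kappa_\pm(\beta,\delta))$, and read off the growth rate from the competition between the continuous part on $[-1,1]$ and the discrete masses that appear when $a>1$ or $b>1$.  Your check that $\kappa_-(\alpha,\gamma)$ and $\kappa_-(\beta,\delta)$ are negative of modulus below $1$ (hence contribute no mass point) is correct and is exactly what restricts attention to $a$ and $b$.

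One small slip: the Askey--Wilson density in the variable $x=\cos\theta$ vanishes like $\sqrt{1-x}$ at $x=1$, not quadratically.  (In $\theta$, the factor $(e^{2i\theta},e^{-2i\theta};q)_\infty$ vanishes like $\theta^2$, but the Jacobian $d\theta/dx\sim(1-x)^{-1/2}$ converts this to a square-root zero in $x$.)  That is precisely what produces the $L^{-3/2}$ prefactor you wrote; ``quadratic'' vanishing would give $L^{-3}$.  This does not affect the limit $J=\lim Z_{L-1}/Z_L$, so your conclusions in all three regimes stand.
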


\section{Formal definition of the semi-infinite ASEP}\label{semi-infinite}

We now define the semi-infinite ASEP.  Since this is a Markov process
with infinitely many states, one must define it carefully;  we 
give its Markov generator below.  
This Markov generator then determines a Markov semigroup
and hence a Markov process, see \cite[Chapter 1]{Liggett} or 
\cite[Section A.1]{G} for details.

Let $\eta=(\eta_1,\eta_2,\dots)$ be a state in $X$.
If $i$ is a positive 
integer, we define from $\eta$ two 
new states $\eta^i$ and $\eta^{i,i+1}$ by 
\begin{equation*}
(\eta^i)_j = 
 \begin{cases} 1-\eta_i & \text{ if $j=i$}\\
         \eta_j &\text{ if $j\neq i$}
\end{cases} 
\end{equation*}
\begin{equation*}
\text{ and }(\eta^{i,i+1})_j =
\begin{cases}
\eta_{i+1} & \text{ if $j=i$}\\
\eta_i    & \text{ if $j=i+1$}\\
\eta_j & \text{ if $j\neq i,i+1$}
\end{cases}
\end{equation*}

Let $C_0(X)$ be the set of {\it cylinder functions} on $X$,
i.e. functions from $X$ to $\R$ which depend on only finitely many 
sites.

\begin{definition}\label{L}
The Markov generator $\LL$ of the  semi-infinite ASEP 
is defined as follows.
Given any function $f\in C_0(X)$, 
\begin{align*}
\LL f(\eta) &= \alpha (1-\eta_1)(f(\eta^1)-f(\eta))+\gamma \eta_1 (f(\eta^1)-f(\eta))\\
&+\sum_{x=1}^\infty \left( \eta_x (1-\eta_{x+1}) (f(\eta^{x,x+1})-f(\eta))+q(1-\eta_x) \eta_{x+1}(f(\eta^{x,x+1})-f(\eta))\right).
\end{align*}
\end{definition}

We are interested in stationary measures of the corresponding
 Markov process.  A measure
$\mu$ is stationary if 
$\E^{\mu}(\LL f) = 0$
for all $f\in C_0(X)$.  
Here $\E^{\mu}$ is the expected value 
with respect to a measure $\mu$.  
Note that since the state space $X$ is infinite, the uniqueness
of the stationary measure is no longer assured.


\section{The matrix ansatz for the semi-infinite ASEP}\label{semiansatz}

We first prove a {\it matrix ansatz} in
the spirit of \cite{Derrida1}.
The version which we shall state and prove for the semi-infinite
ASEP is a  generalization
of a theorem of Grosskinsky \cite[Theorem 3.2]{G}; his ansatz is the same
as ours, except he set $\gamma=0$ and $q=0$. 

In what follows, we use the terminology \emph{signed measure} for a measure
which is not necessarily positive.  We will first 
give a matrix ansatz which describes stationary signed measures
(Theorem \ref{semi-ansatz}), and then in the following section,
we'll give a theorem 
(Theorem \ref{positivity}) which 
provides conditions guaranteeing that the 
measures are positive.

\begin{theorem}\label{semi-ansatz}
Suppose there are matrices $D,E$ and vectors $\langle W|, 
|V\rangle$, which satisfy
\begin{enumerate}
\item[(a.)] $DE-qED=c(D+E)$
\item[(b.)] $\alpha \langle W|E-\gamma \langle W|D = c\langle W|$
\item[(c.)] $(D+E)|V\rangle = |V\rangle.$
\end{enumerate}
Let $\eta = (\eta_1, \eta_2, \dots, \eta_L) \in \{0,1\}^L$.
Then the signed measure $\mu^L$ defined by 
\begin{equation} \label{signed-measure1}
\mu^L(\eta_1,\dots,\eta_L) = 
\frac{\langle W| \prod_{x=1}^L \eta_x D+(1-\eta_x) E|V\rangle}
{\langle W| (D+E)^L | V\rangle}
\end{equation}
is stationary for the process defined by  $\LL$.
Here the parameter $c$ determines the stationary current, i.e.
$\E^{\mu}( \eta_x (1 - \eta_{x+1})-q(1-\eta_x)\eta_{x+1}) = c$ for all $x\in \Z^+$.
\end{theorem}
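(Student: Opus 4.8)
The plan is to verify stationarity directly from the definition $\E^{\mu^L}(\LL f) = 0$ by reducing everything to the matrix relations (a.), (b.), (c.), following the standard Derrida--Evans--Hakim--Pasquier argument but adapted to the semi-infinite setting. Since $f \in C_0(X)$ depends on only finitely many sites, say the first $N$, I would first argue that $\E^{\mu^L}(\LL f)$ only involves the correlation functions $\langle \eta_1 \cdots \eta_M \rangle$ for $M$ slightly larger than $N$ (one needs site $N+1$ because of the hop across bond $(N,N+1)$), so it suffices to work with the signed measures $\mu^M$ for $M = N+1$, and the compatibility/consistency of the $\mu^L$ for different $L$ (which follows from (c.), since $(D+E)|V\rangle = |V\rangle$ lets one "cap off" the matrix product at any point) guarantees this is well-defined. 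Then I would expand $\LL f(\eta)$ as in Definition \ref{L} into a left-boundary term and a sum of bulk bond terms.

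The heart of the argument is the telescoping trick. Define, as in \cite{Derrida1}, an auxiliary vector-valued object: for the bulk, one uses that relation (a.) $DE - qED = c(D+E)$ can be rewritten so that the operator $\hat D = D$, $\hat E = E$ together with a "defect" matrix allow the hopping generator acting on a bond to be written as a difference $X_x - X_{x+1}$ of local quantities (this is the matrix-product analogue of a discrete divergence). Concretely, I would introduce the row vector $\langle W|$ and column vector $|V\rangle$ and show that summing the bulk contributions produces a telescoping sum whose surviving terms are a boundary term at site $1$ and a boundary term "at infinity"; relation (b.) $\alpha\langle W|E - \gamma\langle W|D = c\langle W|$ is exactly what is needed to cancel the site-$1$ boundary term against the left-injection/extraction part of $\LL f$, while relation (c.) together with the truncation argument ensures the term "at $M$" vanishes (the measure closes up with $|V\rangle$ and $(D+E)|V\rangle = |V\rangle$). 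Carrying this out carefully, every term in $\E^{\mu^L}(\LL f)$ cancels, giving stationarity. Finally, for the current statement, I would apply the same computation with $f$ the indicator-type cylinder function detecting the occupations at sites $x$ and $x+1$: the expectation $\E^{\mu}(\eta_x(1-\eta_{x+1}) - q(1-\eta_x)\eta_{x+1})$ becomes $\frac{\langle W| (D+E)^{x-1}(DE - qED)(D+E)^{L-x-1}|V\rangle}{\langle W|(D+E)^L|V\rangle}$, and by (a.) the middle factor $DE - qED = c(D+E)$, so the whole thing collapses to $c \cdot \frac{\langle W|(D+E)^{L-1}|V\rangle}{\langle W|(D+E)^L|V\rangle} = c$ using $\langle W|(D+E)^L|V\rangle$'s consistency, i.e. (c.) again gives the cancellation of the extra $(D+E)$ factor — actually one gets $c$ directly since the normalizations match.

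The main obstacle I anticipate is not the algebra of the telescoping — that is essentially bookkeeping once the right local quantities are identified — but rather the \emph{analytic} subtlety of working on an infinite lattice: one must be careful that $\mu^L$ as defined by \eqref{signed-measure1} genuinely extends to a (signed) measure on all of $X$, that this extension is consistent (the family $\{\mu^L\}$ forms a projective system, which is where (c.) is used a second time), and that $\E^{\mu}(\LL f)$ is well-defined and equals the finite sum one expects — i.e. that there is no contribution "leaking in from infinity." The cleanest way to handle this is to note $\LL f$ is itself a cylinder function whenever $f$ is, so $\E^{\mu}(\LL f)$ only ever sees finitely many correlation functions, and then invoke the consistency of the $\mu^L$ to replace $\E^{\mu}$ by $\E^{\mu^M}$ for a single large enough $M$; after that the problem is purely finite-dimensional linear algebra and the DEHP-style cancellation applies verbatim. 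I would also remark that this theorem is the natural generalization of Grosskinsky's \cite[Theorem 3.2]{G}, which is recovered by setting $\gamma = 0$ and $q = 0$, so the structure of the proof should parallel his with the two extra parameters carried along.
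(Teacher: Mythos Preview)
Your proposal is correct and follows essentially the same approach as the paper: reduce to cylinder functions, expand $\E^{\mu}(\LL f)$ and regroup by $f(\eta)$, use relation (a.) to make the bulk contributions telescope, cancel the left residue against the injection/extraction terms via (b.), and absorb the right residue via (c.) and the consistency of the $\mu^L$; the current computation is identical. The only cosmetic difference is that the paper organizes the telescoping by decomposing $\eta$ into maximal blocks of $\circ$'s and $\bullet$'s rather than invoking an auxiliary ``defect'' matrix, but this is the same cancellation written in different notation.
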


\begin{remark}
The measure $\mu^L$ defined above does not depend 
on the choice of solution $D$, $E$, $\langle W|$, $|V\rangle$.
Indeed, for any word $Y$ in $D$ and $E$, 
by repeatedly applying relations (a.), (b.) and (c.),
one can express $\langle W|Y|V\rangle$ in terms of 
$\alpha$, $\gamma$, $q$, $c$, and $\langle W|V\rangle$.
\end{remark}




\begin{proof}
Suppose that $f\in C_0(X)$ concentrates on sites
$\{1,2,\dots,L\}$.  
Using Definition \ref{L}, the stationary condition which we must check
becomes:
\begin{align*}
0 &=\sum_{\eta} \alpha \mu^L(\eta)(1-\eta_1)(f(\eta^1)-f(\eta))+
    \sum_{\eta} \gamma \mu^L(\eta) \eta_1 (f(\eta^1)-f(\eta)) +\\
 &\hspace{.5cm}\sum_{\eta} \sum_{x=1}^{L-1} \left[ \mu^L(\eta) \eta_x (1-\eta_{x+1})(f(\eta^{x,x+1})-f(\eta)) +  q\mu^L(\eta)(1-\eta_x)\eta_{x+1}(f(\eta^{x,x+1})-f(\eta))\right] +\\
&\hspace{.5cm}\sum_\eta \left[\mu^{L+1}(\eta)\eta_L(1-\eta_{L+1})(f(\eta^{L,L+1})-f(\eta)) +
q \mu^{L+1}
(\eta)(1-\eta_L)\eta_{L+1}(f(\eta^{L,L+1})-f(\eta))\right].
\end{align*}
Here the sum is over all $\eta \in \{0,1\}^L$.
Rewriting this equation gives
\begin{align*}
0 &=\sum_{\eta} f(\eta) \bigg( \alpha \mu^L(\eta^1)(1-\eta^1_1)-
  \alpha \mu^L(\eta) (1-\eta_1)+\gamma \mu^L(\eta^1) \eta^1_1 
 -\gamma \mu^L(\eta) \eta_1) + \\
 &\hspace{.5cm}\sum_{x=1}^{L-1} [\mu^L(\eta^{x,x+1})\eta_x^{x,x+1} (1-\eta_{x+1}^{x,x+1})-\mu^L(\eta) \eta_x (1-\eta_{x+1})+q \mu^L(\eta^{x,x+1})(1-\eta_x^{x,x+1})
  \eta_{x+1}^{x,x+1}\\
 &\hspace{1cm}-q\mu^L(\eta)(1-\eta_x)\eta_{x+1} ]\\
 &\hspace{.5cm}+\mu^{L+1}(\eta^{L,L+1})\eta_L^{L,L+1} (1-\eta_{L+1}^{L,L+1} 
  -\mu^{L+1}(\eta) \eta_L (1-\eta_{L+1})\\
 &\hspace{.5cm}+q\mu^{L+1}(\eta^{L,L+1})(1-\eta_L^{L,L+1})\eta_{L+1}^{L,L+1}-q\mu^{L+1}(\eta)(1-\eta_L)\eta_{L+1} \bigg).
\end{align*}

Note that $\eta^1_1 = 1-\eta_1$.
The coefficient of $f(\eta)$ in the above equation is 
\begin{align*}
&\sum_{x=1}^{L-1} [\mu^L(\eta^{x,x+1}) \eta_{x+1}(1-\eta_x) 
  -\mu^L(\eta) \eta_x (1-\eta_{x+1})+q\mu^L(\eta^{x,x+1})(1-\eta_{x+1})\eta_x\\
&\hspace{.5cm} - q\mu^L(\eta)(1-\eta_x) \eta_{x+1}]\\
&+\mu^{L+1}(\eta^{L,L+1})\eta_{L+1}(1-\eta_L)-\mu^{L+1}(\eta) \eta_L(1-\eta_{L+1})\\
&+q\mu^{L+1}(\eta^{L,L+1})(1-\eta_{L+1})\eta_L - q\mu^{L+1}(\eta)(1-\eta_L) \eta_{L+1} \\
&+\alpha \mu^L(\eta^1)\eta_1 - \alpha\mu^L(\eta) (1-\eta_1)+\gamma \mu^L(\eta^1)(1-\eta_1)-\gamma \mu^L(\eta) \eta_1.
\end{align*}
We aim to show that each coefficient is equal to $0$.
Rearranging terms gives
\begin{align}\label{analyze}
&\sum_{x=1}^{L-1} \bigg[\mu^L(\eta^{x,x+1}) \eta_{x+1}(1-\eta_x) 
- q\mu^L(\eta)(1-\eta_x) \eta_{x+1}\\
  &\hspace{.7cm}-(\mu^L(\eta) \eta_x (1-\eta_{x+1})- \label{3}
    q\mu^L(\eta^{x,x+1})(1-\eta_{x+1})\eta_x)\bigg]\\
&+\mu^{L+1}(\eta^{L,L+1})\eta_{L+1}(1-\eta_L) \label{4}
- q\mu^{L+1}(\eta)(1-\eta_L) \eta_{L+1} \\
&-[\mu^{L+1}(\eta) \eta_L(1-\eta_{L+1}) \label{5}
-q\mu^{L+1}(\eta^{L,L+1})(1-\eta_{L+1})\eta_L]\\ 
&+[\alpha \mu^L(\eta^1)\eta_1  \label{6}
-\gamma \mu^L(\eta) \eta_1]
- [\alpha\mu^L(\eta) (1-\eta_1)-\gamma \mu^L(\eta^1)(1-\eta_1)].
\end{align}

Note that each configuration of particles can be seen as a 
sequence of empty and occupied blocks.  
Suppose that the first $L$ sites of $\eta$ consists of $n$ such 
blocks
$(\circ \dots \circ) (\bullet \dots \bullet)(\circ\dots \circ) \dots (\bullet \dots \bullet)(\circ\dots \circ)$
where there are $\tau_1$ $\circ$'s in the first block,
$\tau_2$ $\bullet$'s in the second block, \dots, and $\tau_n$
$\circ$'s in the last block. Here we assume that 
all $\tau_i$'s are nonzero, so in particular, 
the first $L$ sites of $\eta$ begin and end with $\circ$.
Thinking of the configuration of particles as a sequence of 
empty and occupied blocks, we also use $\tau$ to denote $\eta$.

At a boundary between a full and empty block ($\tau_i$ and 
$\tau_{i+1}$) we can apply the bulk rule of the ansatz to get 
$\tau-q\tau' = c(\tau^i + \tau^{i+1})$.  
Here, $\tau'$ is the configuration obtained from $\tau$ by 
swapping the adjacent $\bullet$ and $\circ$ in the $i$th and $i+1$st block,
and $\tau^i$ is obtained from $\tau$ by deleting one site in block $i$.

Noting that it has non-zero values only at the block boundaries, the 
sum over $x$ in \eqref{analyze} and \eqref{3} telescopes:
\begin{equation*}
\sum_{i=1, i \text{ odd}}^{n-2}
  c [\mu^{L-1}(\tau^i)+\mu^{L-1}(\tau^{i+1}) -(\mu^{L-1}(\tau^{i+1})+\mu^{L-1}(\tau^{i+2}))]
  = c\mu^{L-1}(\tau^1)-c\mu^{L-1}(\tau^n). 
\end{equation*}

Since we have assumed that the first $L$ sites of $\eta$ begin and end
with a $\circ$, we have that 
$\eta_1 = \eta_L=0$.  
Applying this and the relations of the ansatz  allows us to simplify 
the quantities \eqref{4}, \eqref{5} and \eqref{6}:
\begin{align*}
+\mu^{L+1}(\eta^{L,L+1})\eta_{L+1}(1-\eta_L)-q\mu^{L+1}(\eta)(1-\eta_L)\eta_{L+1} &= c(\mu^L(\eta)+\mu^{L}(\eta^L)),\\
-[\mu^{L+1}(\eta)\eta_L(1-\eta_{L+1})-q\mu^{L+1}(\eta^{L,L+1})(1-\eta_{L+1})\eta_L] &= 0,\\
+\alpha \mu^L(\eta^1) \eta_1 - \gamma \mu^L(\eta) \eta_1 &= 0,\\
-[\alpha \mu^L(\eta)(1-\eta_1) - \gamma \mu^L(\eta^1)(1-\eta_1)] &= 
  -\alpha \mu^L(\eta) + \gamma \mu^{L}(\eta^1).
\end{align*}
Therefore the coefficient of $f(\eta)$, which is given 
by \eqref{analyze} through \eqref{6}, simplifies to 
$$c\mu^{L-1}(\tau^1)-c\mu^{L-1}(\tau^n) + c\mu^L(\eta) + c\mu^L(\eta^L) 
  -\alpha \mu^L(\eta)+\gamma \mu^L(\eta^1).$$
But now note that 
by relation (c.) of the ansatz,
$ c\mu^L(\eta) + c\mu^L(\eta^L) =c\mu^{L-1}(\tau^n)$,
and by relation (b.) of the ansatz, 
$-\alpha \mu^L(\eta)+\gamma \mu^L(\eta^1) = -c\mu^{L-1}(\tau^1).$
It follows that the coefficient of $f(\eta)$ is $0$.

This completes the proof, when the first $L$ sites
of $\eta$ begin and end with $\circ$.  The proof is analogous
if the first $L$ sites begin or end with $\bullet$.
\end{proof}

\begin{remark}\label{semi-alternate}
In fact the above argument proves the following statement.
Suppose that $g:\{D,E\}^* \to \R$ is a function on 
words in $D$ and $E$ (extended linearly to linear combinations
of such words) such that for any words $Y$ and $Y'$ in 
$D$ and $E$, we have the following:
\begin{enumerate}
\item[(a.)] $g(Y(DE-qED)Y')=cg(Y(D+E)Y')$
\item[(b.)] $g(\alpha EY-\gamma DY) = cg(Y)$
\item[(c.)] $g(Y(D+E)) = g(Y).$
\end{enumerate}
Let $\eta = (\eta_1, \eta_2, \dots, \eta_L) \in \{0,1\}^L$.
Then the signed measure $\mu^L$ defined by 
\begin{equation}\label{signed-measure}
\mu^L(\eta_1,\dots,\eta_L) = 
\frac{g(\prod_{x=1}^L \eta_x D+(1-\eta_x) E)}
{g((D+E)^L)}
\end{equation}
is stationary for the process defined by  $\LL$.
Here the parameter $c$ determines the stationary current, i.e.
$\E^{\mu}( \eta_x (1 - \eta_{x+1})-q(1-\eta_x)\eta_{x+1}) = c$ for all $x\in \Z^+$.
\end{remark}

\section{Positivity of the measures} \label{sec:positivity}

One would like to know when the signed measure defined in 
\eqref{signed-measure1} or 
\eqref{signed-measure}
is positive.
\begin{theorem}\label{positivity}  
The signed measure defined in \eqref{signed-measure1} (equivalently, 
\eqref{signed-measure}) is positive provided that $q \leq 1$ and
one of the inequalities below is satisfied:
\begin{enumerate}
\item $a\geq 1$ and $c \leq (1-q)a/(1+a)^2$, or 
\item $a\leq 1$ and $c \leq (1-q)/4$.
\end{enumerate}
Here $a$ is defined as in Proposition \ref{prop:current}.
\vspace{-.1cm}
\end{theorem}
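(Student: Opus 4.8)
The plan is to establish positivity by exhibiting an explicit solution $D, E, \langle W|, |V\rangle$ to the relations (a.), (b.), (c.) of Theorem \ref{semi-ansatz} and showing that for this solution all the matrix elements $\langle W| \prod_x (\eta_x D + (1-\eta_x) E) |V\rangle$ are nonnegative (with the normalization $\langle W| (D+E)^L |V\rangle$ being positive). First I would construct such a solution. Relation (a.), namely $DE - qED = c(D+E)$, is — up to the rescaling $D \mapsto D/c$, $E \mapsto E/c$ — the standard DEHP bulk relation $\hat D \hat E - q \hat E \hat D = \hat D + \hat E$; so one can take the known semi-infinite (tridiagonal) representation of $\hat D, \hat E$ on $\ell^2(\mathbb{Z}_{\geq 0})$ used in \cite{USW, Derrida1}, scale by $c$, and then choose $\langle W|$ to satisfy the boundary relation (b.) $\alpha \langle W| E - \gamma \langle W| D = c \langle W|$, which after rescaling is again the usual left-boundary relation with parameters $\alpha, \gamma$. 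The new and crucial point is relation (c.): instead of the two-parameter right-boundary relation $\beta D |V\rangle - \delta E |V\rangle = |V\rangle$, we need the much simpler eigenvector equation $(D+E)|V\rangle = |V\rangle$, i.e. $|V\rangle$ is a $1$-eigenvector of $D + E$. The key step is therefore to show that the operator $D+E$ (equivalently $c^{-1}(D+E) = \hat D + \hat E$) has $1$ (equivalently $1/c$) as an eigenvalue with an eigenvector $|V\rangle$ of a definite sign (all entries nonnegative), and that this requires exactly the stated inequalities on $a$ and $c$.

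The natural way to carry this out is via the Askey--Wilson / orthogonal-polynomial diagonalization of $D+E$ already invoked in \cite{USW}. The operator $\hat D + \hat E$ is a Jacobi (tridiagonal) matrix whose spectral measure is the Askey--Wilson measure, supported on $[-1,1]$ in a suitable variable together with possibly finitely many mass points outside; in the variable used by \cite{USW} the "bulk" spectrum of $\hat D + \hat E$ is the interval $[\,2\sqrt q/(1-q)\cdot\!\cdots, \ldots]$ — more precisely $\hat D + \hat E$ has continuous spectrum $[\frac{2}{1-q}, \frac{2}{1-q}\cdot\frac{?}{?}]$ — and, depending on the boundary parameters, an isolated eigenvalue $\frac{a}{1-q} + \frac{1}{(1-q)a} = \frac{1 + a^2}{(1-q)a}$ whenever $a > 1$, with a corresponding eigenvector whose entries have constant sign (this is the standard fact that the top eigenvector of a tridiagonal matrix with positive off-diagonal entries is sign-definite, or sign-alternating, and can be normalized to be nonnegative). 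Setting this isolated eigenvalue equal to $1/c$ gives $c = \frac{(1-q)a}{1+a^2}$ — this is the boundary case of condition (1) — while for $c$ smaller one instead picks up $1/c$ from the continuous spectrum (or the edge), i.e. $1/c \geq \frac{2}{1-q}\cdot(\text{edge value})$, which after simplification is condition (2) $c \leq (1-q)/4$ when $a \leq 1$, and the complementary part of condition (1) when $a \geq 1$. Concretely: $1/c$ lies in the spectrum of $D+E$ precisely when $c \leq (1-q)a/(1+a)^2$ (if $a \geq 1$) or $c \leq (1-q)/4$ (if $a \leq 1$), since $\frac{(1-q)a}{(1+a)^2}$ is the value at which the isolated mass point hits $1/c$, and for $a\le 1$ that isolated point is absent so the relevant threshold is the spectral edge $(1-q)/4$. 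I would verify the needed sign-definiteness of the eigenvector directly from the three-term recurrence (the recurrence coefficients $b_n, \lambda_n$ of $\hat D + \hat E$ are explicit in \cite{USW}), which also shows $\langle W| V\rangle > 0$ after suitable normalization, hence the denominator $\langle W|(D+E)^L|V\rangle = \langle W| V\rangle$ (using (c.)) is positive.

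With such a solution in hand, positivity of $\mu^L(\eta_1,\ldots,\eta_L)$ reduces to showing $\langle W| \prod_{x=1}^L (\eta_x D + (1-\eta_x) E) |V\rangle \geq 0$. Here I would use that, in this representation, $\langle W|$ has nonnegative entries (guaranteed by $q \leq 1$ and the form of the boundary relation (b.), since $\langle W|$ is again given by a sign-definite vector — compare the left-boundary analysis and note $a \geq 1$ or the relevant regime keeps entries of one sign), $|V\rangle$ has nonnegative entries by the above, and each of $D$ and $E$ has nonnegative entries: indeed $\hat D, \hat E$ in the standard representation are nonnegative matrices when $q \leq 1$, and multiplying by $c > 0$ preserves this. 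A product of entrywise-nonnegative operators applied between two nonnegative vectors yields a nonnegative number, giving positivity of the numerator. Dividing by the positive denominator gives positivity of the measure, and by the remark following Theorem \ref{semi-ansatz} the value of $\mu^L$ does not depend on the choice of solution, so the inequality holds for the measure defined in \eqref{signed-measure1}.

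The main obstacle I anticipate is the spectral analysis of $D+E$: showing that $1/c$ being in the spectrum is \emph{equivalent} to the displayed inequalities, and — more subtly — that at and below the threshold the eigenvector (or, on the continuous spectrum, the generalized eigenvector / limiting vector that still satisfies $(D+E)|V\rangle = |V\rangle$ in the appropriate sense) genuinely has nonnegative entries and pairs to a finite positive number against $\langle W|$. On the continuous part of the spectrum the honest eigenvector is not in $\ell^2$, so one must be careful about what "$|V\rangle$" means — most likely one works with a non-$\ell^2$ vector $|V\rangle$ and checks the relations and the finiteness of $\langle W| Y |V\rangle$ for words $Y$ by hand, or one uses the representation-free reformulation of Remark \ref{semi-alternate} and a direct recursive computation of $g(Y)$ from relations (a.)--(c.), arguing positivity of those values by induction. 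I would fall back on the latter (purely algebraic) route if the analytic eigenvector is awkward, since then "positivity" becomes a statement about an explicit family of polynomials in $\alpha, \gamma, q, c$ whose positivity under the stated inequalities can be checked via the combinatorial/Askey--Wilson formula for $\langle W| Y | V\rangle$ from \cite{USW}.
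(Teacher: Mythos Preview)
Your approach is genuinely different from the paper's, and the obstacle you flag at the end is fatal rather than merely technical.

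The paper does \emph{not} attempt to construct a vector $|V\rangle$ satisfying $(D+E)|V\rangle=|V\rangle$. Instead it works entirely with the \emph{finite} ansatz: take the USW solution $D,E,\langle W|,|V\rangle$ of Theorem~\ref{USWsolution} for the finite ASEP with arbitrary right-boundary rates $\beta,\delta$, set $C=D+E$, and define
\[
g(Y)\;=\;c^{\ell(Y)}\lim_{m\to\infty}\frac{\langle W|\,Y\,C^m|V\rangle}{\langle W|\,C^m|V\rangle},
\qquad c:=J=\lim_{m\to\infty}\frac{Z_{m-1}}{Z_m}.
\]
One checks (Proposition~\ref{prop:limit}) that the limit exists, and (Proposition~\ref{prop1}) that $g$ satisfies the reformulated ansatz of Remark~\ref{semi-alternate}; relation~(c.) drops out of the definition of $J$. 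Positivity is then \emph{free}: each prelimit ratio $\langle W|YC^m|V\rangle/\langle W|C^{\ell(Y)+m}|V\rangle$ is a correlation function in the finite ASEP on $\ell(Y)+m$ sites, hence a probability. The inequalities on $c$ come not from any spectral analysis but from the range of currents $J$ attainable as $\beta,\delta$ (equivalently $b$) vary, via Proposition~\ref{prop:current}; sending $b\to\infty$ drives $J$ to $0$, so every $c$ up to the threshold is realized.

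Your direct construction breaks for $c$ \emph{strictly} below the threshold, and that is the generic case the theorem must cover. Read the inequality carefully: $c\le(1-q)a/(1+a)^2$ means $1/c\ge(1+a)^2/((1-q)a)$, so $1/c$ lies at or \emph{above} the top of the spectrum of $\hat D+\hat E$, not in it. For strict inequality there is no eigenvector at all, $\ell^2$ or otherwise, with the required eigenvalue; the formal three-term-recurrence solutions exist but the pairing $\langle W|Y|V\rangle$ need not converge, and your entrywise-nonnegativity argument has no object to apply to. (A secondary issue: in the USW representation the entries of $D$, $E$, and $\langle W|$ involve all four parameters and are not manifestly nonnegative, so that step also needs justification.) Your fallback to Remark~\ref{semi-alternate} is the right instinct, but one then needs an independent source of positivity for the recursively defined $g(Y)$ --- and the paper's limiting construction is precisely that source, importing positivity from the finite model rather than proving it by hand.
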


We will prove Theorem \ref{positivity} by finding a solution to 
the semi-infinite matrix ansatz (the version in 
Remark \ref{semi-alternate}) which is obtained as a limit
of a solution to the finite matrix ansatz.

\begin{proposition}\label{prop:limit}
Let $D$, $E$, $\langle W|$, $|V\rangle$ denote the 
solution to the finite matrix ansatz from Theorem \ref{USWsolution}.
Let $C = D+E$.
Then for any word $Y$ in $D$ and $E$, the following limit exists:
$$\lim_{m \to \infty} \frac{\langle W| YC^m |V\rangle}{\langle W|C^m|V\rangle}.$$
\end{proposition}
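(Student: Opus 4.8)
The plan is to use the explicit solution of the finite matrix ansatz coming from Askey--Wilson theory (Theorem~\ref{USWsolution}), in which $D$, $E$, $\langle W|$, $|V\rangle$ act on a Hilbert (or suitably completed) space and $C=D+E$ has a known spectral decomposition. The key point is that the limit $\lim_{m\to\infty} \langle W|YC^m|V\rangle / \langle W|C^m|V\rangle$ is, morally, the expectation of the observable ``$Y$'' against the eigenvector of $C$ for its top of the spectrum (a Perron--Frobenius-type statement), and this is exactly the object that will solve the semi-infinite ansatz. First I would recall, from \cite{USW}, that $C=D+E$ is a self-adjoint operator whose spectrum is an interval $[\lambda_-,\lambda_+]$ (coming from the continuous part of the Askey--Wilson measure) together with finitely many isolated eigenvalues larger than $\lambda_+$ when the boundary parameters are in the appropriate range; the three regimes $(1)$, $(2)$, $(3)$ of Proposition~\ref{prop:current} correspond exactly to which part of the spectrum carries the top, with $J_L = Z_{L-1}/Z_L = \langle W|C^{L-1}|V\rangle/\langle W|C^L|V\rangle$ converging to $1/\lambda_{\max}$ where $\lambda_{\max}$ is the largest spectral value (or edge) that the vectors $\langle W|$ and $|V\rangle$ ``see.''

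Concretely, I would write $\langle W|YC^m|V\rangle = \int \langle W|Y \,dP_\lambda\, |V\rangle\, \lambda^m$ using the spectral measure $P_\lambda$ of $C$, and likewise $\langle W|C^m|V\rangle = \int \langle W| dP_\lambda|V\rangle\,\lambda^m$. Dividing and sending $m\to\infty$, the ratio is governed by the behavior near $\lambda=\lambda_{\max}$. In the case where $\lambda_{\max}$ is a genuine isolated eigenvalue with eigenprojection $\Pi$, the limit is immediately $\langle W|Y\Pi|V\rangle/\langle W|\Pi|V\rangle$, provided $\langle W|\Pi|V\rangle\neq 0$ (which holds in the relevant parameter range). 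In the case where $\lambda_{\max}$ is the right endpoint of the continuous spectrum and the ``density'' $d\langle W|P_\lambda|V\rangle/d\lambda$ has a known power-law vanishing (square-root type) at $\lambda_{\max}$, I would invoke a standard Tauberian/Laplace-type argument: both integrals are dominated by a shrinking neighborhood of $\lambda_{\max}$, and the ratio converges because the sub-leading correction factors (the $\lambda^m$ weight) cancel in the limit once one shows the numerator density $d\langle W|Y P_\lambda|V\rangle/d\lambda$ is comparable to (a constant times) the denominator density near the edge. This last comparability is where I would need the explicit Askey--Wilson formulas: one checks that $\langle W|Y P_\lambda|V\rangle$ and $\langle W|P_\lambda|V\rangle$ differ near the edge only by a factor that extends continuously to $\lambda_{\max}$ with a nonzero value (essentially a polynomial in the Askey--Wilson variable evaluated at the spectral edge).

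The main obstacle I expect is precisely the edge case (2) of Theorem~\ref{positivity}, i.e.\ when the relevant top of the spectrum is the endpoint of a continuous band rather than an isolated eigenvalue: there one must justify interchanging limits and integrals and control the rate at which the spectral density vanishes at the edge, carefully enough that the quotient of the two edge-dominated integrals has a finite nonzero limit. A clean way to sidestep some analytic fuss is to work instead with the $q$-deformed oscillator representation used in \cite{USW}: there $D$ and $E$ are explicit tridiagonal (Jacobi) matrices, $C=D+E$ is a Jacobi matrix whose orthogonal polynomials are Askey--Wilson polynomials, and $\langle W|C^m|V\rangle$ is literally an Askey--Wilson moment; the asymptotics of such moments, and of the ``twisted moments'' $\langle W|YC^m|V\rangle$, are classical. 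I would therefore structure the proof as: (i) recall the Jacobi-matrix form of the solution and the Askey--Wilson orthogonality measure $\nu$; (ii) express both quantities as $\int \lambda^m \, f_Y(\lambda)\, d\nu(\lambda)$ and $\int \lambda^m\, d\nu(\lambda)$ for an explicit bounded function $f_Y$ determined by reducing the word $Y$ via relations (A.)--(C.); (iii) apply the asymptotics of moments of a measure with compact support, dominated by its top point $\lambda_{\max}\in\operatorname{supp}\nu$, to conclude the ratio tends to $f_Y(\lambda_{\max})$ when $\lambda_{\max}$ is an atom, or to the appropriate edge limit of $f_Y$ otherwise, in all cases finite. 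This also has the pleasant consequence, to be used later, that the limiting functional $g(Y):=\lim_m \langle W|YC^m|V\rangle/\langle W|C^m|V\rangle$ automatically satisfies relation~(c.) of Remark~\ref{semi-alternate}, since inserting an extra $C=D+E$ on the right only shifts $m\mapsto m+1$.
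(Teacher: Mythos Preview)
Your approach is genuinely different from the paper's, and considerably heavier. The paper gives an elementary inductive argument that uses \emph{only} the algebraic relations (A.) and (B.) together with the single analytic input that $J_{m+1}=Z_m/Z_{m+1}$ converges (Proposition~\ref{prop:current}); no spectral theory, no Tauberian estimates, no explicit Askey--Wilson formulas. Concretely: for $Y$ of length~$1$, set $x_m=\langle W|EC^m|V\rangle/\langle W|C^{m+1}|V\rangle$ and $y_m=\langle W|DC^m|V\rangle/\langle W|C^{m+1}|V\rangle$; then $x_m+y_m=1$ and, by (B.), $\alpha x_m-\gamma y_m=J_{m+1}$, so $x_m,y_m$ are explicit functions of $J_{m+1}$ and hence converge. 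For longer $Y'$, write $Y'=EY$ or $Y'=DY$; relation (B.) gives one linear equation in $\langle W|EYC^m|V\rangle$ and $\langle W|DYC^m|V\rangle$, and repeatedly commuting the leading letter to the right via (A.) gives a second equation whose right-hand side involves only shorter words and $\langle W|YC^{m+1}|V\rangle$. The $2\times 2$ coefficient matrix has determinant $\alpha+\gamma q^{\ell(Y)}\neq 0$, so one solves and inducts. This argument is indifferent to whether $\lambda_{\max}$ is an atom or a band edge: the edge case that you flag as ``the main obstacle'' simply never arises.

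Your spectral route is plausible in outline but has a real gap precisely at that edge case. Writing $\langle W|YC^m|V\rangle=\int \lambda^m\,d\mu_Y(\lambda)$ with $\mu_Y(A)=\langle W|YP_A|V\rangle$ is fine, but $\mu_Y$ is a \emph{signed} (complex) measure, and you need that $d\mu_Y/d\nu$ extends continuously to $\lambda_{\max}$ with a finite value in order for the Laplace/Tauberian ratio argument to go through. You assert this (``a polynomial in the Askey--Wilson variable evaluated at the spectral edge'') but do not prove it; since $Y$ does not commute with $C$, it is not automatic, and verifying it would require exactly the kind of explicit computation you hoped to avoid. Your step (ii), ``reducing the word $Y$ via relations (A.)--(C.)'' to produce a bounded $f_Y$, is also underspecified: relation (C.) acts on the right of $C^m$, not between $Y$ and $C^m$, so it does not directly help simplify $YC^m$. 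If you try to carry this reduction out honestly you will find yourself reinventing the paper's induction.
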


\begin{proof}
We will use relations (A.) and (B.) of the finite matrix ansatz
together with the fact (Proposition \ref{prop:current})
that $\lim_{m \to \infty} \frac{\langle W| C^{m-1} |V\rangle}{\langle W|C^m|V\rangle}$
exists.  Note that the latter fact implies that for any finite $\ell$,
both 
$\lim_{m \to \infty} \frac{\langle W| C^{\ell} C^{m-1} |V\rangle}{\langle W|C^m|V\rangle}$
and $\lim_{m \to \infty} \frac{\langle W|  C^{m-1} |V\rangle}{\langle W|C^{\ell} 
C^m|V\rangle}$ exist.

We will prove the result by induction on the length of $Y$.  
We start by considering the length $1$ case, i.e. $Y=D$ or $Y=E$.
Let $x_m = \frac{\langle W|EC^m |V \rangle}{\langle W|C^{m+1}|V\rangle}$
and $y_m = \frac{\langle W|DC^m |V \rangle}{\langle W|C^{m+1}|V\rangle}$.
Then we have $x_m + y_m=1$.  But also, by relation (B.) of the ansatz, we have
$\alpha x_m - \gamma y_m = \frac{Z_m}{Z_{m+1}}=J_{m+1}$.
We can therefore solve for $x_m$ and $y_m$ in terms of $J_{m+1}$;
since the limit of $J_{m+1}$ exists as $m \to \infty$, so does
the limit of $x_m$ and $y_m$.
It follows that for $Y = D$ or $Y=E$, the limit
$\lim_{m \to \infty} \frac{\langle W| YC^m |V\rangle}{\langle W|C^m|V\rangle}$
exists.

More generally, for any word $Y'$ of length $\ell(Y')>1$,
we will show that we can solve for 
$\frac{\langle W|Y' C^m |V\rangle}{\langle W|C^m |V\rangle}$ 
in terms of quantities of the form 
$\frac{\langle W|Y C^m |V\rangle}{\langle W|C^m |V\rangle}$ 
where the length $\ell(Y)$ of $Y$ is at most $\ell(Y')-1$.
This will complete the proof, since by the inductive hypothesis,
we can write the latter quantities 
in terms 
of the parameters $\alpha$, $\beta$, $\gamma$, $\delta$, $q$ and 
$J_m$'s, and hence can take the limit as $m$ goes to infinity.

Note that any word $Y'$ of length greater than $1$
can be written in the form $DY$ or $EY$ where 
the length of $Y$ is non-negative.
Using relation (B.) of the finite matrix ansatz, 
for any word $Y$ in $D$ and $E$, we have that 
\begin{equation}\label{eq:1}
\alpha \langle W| EYC^m | V\rangle - \gamma \langle W|DYC^m |V\rangle 
= \langle W|YC^m |V \rangle.
\end{equation}
And by repeatedly using relation (A.) of the ansatz, 
we can write
\begin{align*}
q^{\ell(Y)} \langle W|EYC^m|V\rangle &=
  q^{\# E's \text{ in }Y} \langle W|YEC^m|V\rangle +
\text{ terms of shorter length.}\\
\langle W|DYC^m|V \rangle &=
  q^{\# E's \text{ in }Y} \langle W|YDC^m|V\rangle +
\text{ terms of shorter length.}
\end{align*}
Here a {\it term of shorter length} means 
a monomial in the parameters times
a term of the form $\langle W|Y''C^m|V \rangle$
where $\ell(Y'') < \ell(YE) = \ell(YD).$

Summing the last two equations gives
\begin{equation}\label{eq:2}
q^{\ell(Y)} \langle W|EYC^m|V\rangle +
\langle W|DYC^m|V \rangle =
  q^{\# E's \text{ in }Y} \langle W|YC^{m+1}|V\rangle +
\text{ terms of shorter length.}
\end{equation}
But now since the right-hand sides of equations
\eqref{eq:1} and \eqref{eq:2} are known quantities,
and the determinant of the coefficient matrix is 
$\alpha + \gamma q^{\ell(Y)}$ which is nonzero,
we can solve for 
$ \langle W|EYC^m|V\rangle $ and 
$\langle W|DYC^m|V \rangle$.  This completes the proof. 
\end{proof}

\begin{proposition}\label{prop1}
Suppose that $q\neq 1$.  
Let $D$, $E$, $\langle W|$, and $|V\rangle$ be as in 
Theorem \ref{USWsolution}, and set $C = D+E$.
Let $c=J$ (recall that $J$ is given by Proposition
\ref{prop:current}, depending on three cases).
Denote the length of $Y$ by $\ell(Y)$.  
For each word $Y$ in $D$ and $E$, define 
$$g(Y) = c^{\ell(Y)} \lim_{m \to \infty}
\frac{\langle W|YC^m |V\rangle}{\langle W| C^m |V\rangle}.$$
Then $g(Y)$ satisfies the relations of 
Remark \ref{semi-alternate}.
\end{proposition}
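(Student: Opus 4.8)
The plan is to verify directly that the function $g$ defined from the Askey--Wilson solution of the finite matrix ansatz satisfies the three defining relations (a.), (b.), (c.) of Remark \ref{semi-alternate}, using Proposition \ref{prop:limit} to guarantee that every limit in sight exists. The key point throughout is that $g(Y)$ is, up to the factor $c^{\ell(Y)}$, the limit of $\langle W|YC^m|V\rangle / \langle W|C^m|V\rangle$, so each relation will follow from the corresponding relation of the \emph{finite} matrix ansatz (A.), (B.), or from the definition $C = D+E$, after multiplying through by an appropriate power of $c$ and letting $m \to \infty$. Recall that here $c = J = \lim_{m\to\infty} Z_{m-1}/Z_m$.

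First I would check relation (c.). Given words $Y, Y'$ in $D$ and $E$, we have $Y(D+E) = YC$, so $\langle W| Y C \cdot C^m |V\rangle = \langle W| Y C^{m+1}|V\rangle$; dividing by $\langle W|C^m|V\rangle$ and by $\langle W|C^{m+1}|V\rangle/\langle W|C^m|V\rangle \to 1/J$ wait --- more cleanly, $\frac{\langle W|YC^{m+1}|V\rangle}{\langle W|C^{m+1}|V\rangle}$ and $\frac{\langle W|YC^m|V\rangle}{\langle W|C^m|V\rangle}$ have the same limit since both compute $\lim_m \langle W|YC^m|V\rangle/\langle W|C^m|V\rangle$ (shifting $m$ by one does not change a limit). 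Hence $g(Y(D+E)) = c^{\ell(Y)+1}\lim_m \frac{\langle W|YC^{m+1}|V\rangle}{\langle W|C^m|V\rangle} = c^{\ell(Y)+1}\cdot \frac{1}{J}\lim_m\frac{\langle W|YC^{m+1}|V\rangle}{\langle W|C^{m+1}|V\rangle} = c^{\ell(Y)}\lim_m\frac{\langle W|YC^m|V\rangle}{\langle W|C^m|V\rangle} = g(Y)$, using $c = J$. Next, relation (b.): starting from (B.) of the finite ansatz, $\alpha\langle W|EY C^m|V\rangle - \gamma\langle W|DYC^m|V\rangle = \langle W|YC^m|V\rangle$ for any word $Y$; dividing by $\langle W|C^m|V\rangle$, taking $m\to\infty$ (each term converges by Proposition \ref{prop:limit}), and multiplying by $c^{\ell(Y)+1}$ gives $\alpha g(EY) - \gamma g(DY) = c\, g(Y)$, which is exactly (b.). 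Finally, relation (a.): from (A.) of the finite ansatz, $DE - qED = D+E$, so for any words $Y, Y'$ we get $\langle W|Y(DE-qED)Y'C^m|V\rangle = \langle W|Y(D+E)Y'C^m|V\rangle$; dividing by $\langle W|C^m|V\rangle$, passing to the limit, and multiplying by the appropriate power of $c$ --- note $\ell(Y(DE-qED)Y') = \ell(Y(D+E)Y') + 1$ since $DE$ and $ED$ have length $2$ while $D$ and $E$ have length $1$ --- yields $g(Y(DE-qED)Y') = c\, g(Y(D+E)Y')$, which is (a.).

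The only subtlety, and the step I would be most careful about, is bookkeeping the powers of $c$ and confirming the length counts line up so that each finite-ansatz relation (which changes word length by exactly $1$ on one side) translates into the stated semi-infinite relation (where the factor $c^{\ell(\cdot)}$ must match on both sides). For (a.) the left side has a word of length $\ell(Y)+\ell(Y')+2$ and the right a word of length $\ell(Y)+\ell(Y')+1$, so the extra $c$ on the right of (a.) is precisely what the definition of $g$ supplies; similarly (b.) relates length $\ell(Y)+1$ to length $\ell(Y)$, matching the extra $c$; and (c.) relates length $\ell(Y)+1$ to length $\ell(Y)$, again matching. There is genuinely no hard analytic content beyond Proposition \ref{prop:limit} --- once existence of all the relevant limits is in hand, every relation is a one-line limit argument, and the proof is essentially a matter of carefully transcribing (A.), (B.), and $C = D+E$ through the limit while tracking the $c^{\ell}$ normalization.
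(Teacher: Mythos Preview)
Your proof is correct and follows essentially the same approach as the paper: both derive relations (a.) and (b.) directly from the finite-ansatz relations (A.) and (B.) after multiplying through by the appropriate power of $c$, and both verify (c.) by writing $\langle W|YC^{m+1}|V\rangle/\langle W|C^m|V\rangle$ as a product of $\langle W|YC^{m+1}|V\rangle/\langle W|C^{m+1}|V\rangle$ and $\langle W|C^{m+1}|V\rangle/\langle W|C^m|V\rangle \to 1/J = 1/c$. The paper's write-up is terser---it simply asserts that (a.) and (b.) follow from (A.) and (B.) without spelling out the length bookkeeping---but the content is the same.
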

\begin{proof}
By Proposition \ref{prop:limit}, the definition of $g(Y)$ makes 
sense.  Now note that 
the relations (a.) and (b.) of Remark \ref{semi-alternate}
follow directly from relations (A.) and (B.) of Theorem \ref{ansatz}.
To check relation (c.), note that 
\begin{align*}
g(Y(D+E)) &= c^{\ell(Y)+1} \lim_{m\to \infty} \frac{\langle W|Y(D+E)C^m|V\rangle}{\langle W|C^m |V \rangle} \\
    &= c^{\ell(Y)+1} \lim_{m\to \infty} \frac{\langle W|YC^{m+1}|V\rangle}{\langle W|C^m |V \rangle} \\
    &= c^{\ell(Y)+1} \lim_{m\to \infty} \frac{\langle W|YC^{m+1}|V\rangle}{\langle W|C^{m+1} |V \rangle}  \cdot 
\frac{\langle W|C^{m+1}|V\rangle}{\langle W|C^{m} |V \rangle} \\
    &= c^{\ell(Y)} \lim_{m\to \infty} \frac{\langle W|YC^{m+1}|V\rangle}{\langle W|C^{m+1} |V \rangle} \\
   &=g(Y).
\end{align*}
\end{proof}

Finally we turn to the proof of Theorem \ref{positivity}.
\begin{proof}
First note that if $q=1$ (hence $c=0$), it is easy to check that 
the measure defined by \eqref{signed-measure} is positive.  This 
can be checked directly from the ansatz relations, which become
very simple.

We now consider the case that $q<1$.
By Proposition \ref{prop1}, we can define a function $g$
which satisfies the relations of Remark \ref{semi-alternate}.  Therefore
the signed measure defined by \eqref{signed-measure}
is stationary for the process defined by $\LL$.
Moreover, 
$$\frac{\langle W|YC^m |V \rangle}{\langle W|C^{\ell(Y)+m}|V \rangle}$$
is non-negative, because it is a correlation function in the finite ASEP.
Since $c$ is positive, and $g(Y)$ is a limit of non-negative values,
it follows that 
$g(Y)$ is non-negative.
Therefore the signed measure defined by \eqref{signed-measure}
using the function $g$ from Proposition \ref{prop1} is a positive measure.

We now need to check that $c$ satisfies one of the 
inequalities in Theorem \ref{positivity}, and that indeed,
any pair of $a$ and $c$ satisfying these inequalities can be obtained
from Proposition \ref{prop1} using a suitable choice of $\alpha$, $\beta$, 
$\gamma$, and $\delta$.

Recall that $J$ was computed in Proposition \ref{prop:current},
and that we have set $c=J$.  Therefore
in the first two cases, $c$ satisfies the inequalities 
of Theorem \ref{positivity}.
We now consider the third case.  
If $a \geq 1$ and $b>a$ then it follows that 
$$c =J = \frac{(1-q)b}{(1+b)^2} \leq \frac{(1-q)a}{(1+a)^2}.$$
While if $a \leq 1$ and $b>a$ then 
$$c= J = \frac{(1-q)b}{(1+b)^2} \leq \frac{1-q}{4}.$$
In both cases, it follows that $c=J$ satisfies the 
inequalities of Theorem \ref{positivity}.  

Moreoever, if we let $b$ tend to infinity 
(one may achieve this by sending $\beta$ to $0$), then
$c = \frac{(1-q)b}{(1+b)^2}$ tends to $0$.
Therefore it is possible to  choose appropriate
values $\alpha$, $\beta$, $\gamma$, $\delta$ and $q$
so as to realize any pair $(a,c)$ satisfying 
the conditions of Theorem \ref{positivity}.
By the previous arguments, the corresponding measure 
that we get in this case is positive.
\end{proof}

\section{Proof of Theorem \ref{mainthm}}\label{proof}

In this section we prove Theorem \ref{mainthm}.

Note that if we set $\beta=c$ and $\delta=-c$ in Theorem 
\ref{ansatz}, then the relations
(A.), (B.), and (C.) become:
\begin{enumerate}
\item[(A'.)] $DE-qED=D+E$
\item[(B'.)] $\alpha \langle W|E-\gamma \langle W|D = \langle W|$
\item[(C'.)] $ (D+E)|V\rangle  = \frac{1}{c} |V\rangle.$
\end{enumerate}

Note that these relations are nearly identical to the relations
(a.), (b.) and (c.) from the Matrix Ansatz for the semi-infinite ASEP.

\begin{proposition}\label{measures}
Suppose that $d, e$, $\langle w|$ and $|v \rangle$ satisfy
(a.), (b.), and (c.), and suppose that
$D, E$, $\langle W|$, and $|V\rangle$ satisfy
(A'.), (B'.), and (C'.).
Let $y$ be an arbitrary word of length $\ell$ in $d$ and $e$,
and let $Y$ be the corresponding word in $D$ and $E$.
Then $$\frac{\langle w|y|v\rangle}{\langle w|(d+e)^{\ell}|v\rangle}
 =  \frac{\langle W|Y|V\rangle}{\langle W|(D+E)^{\ell}|V\rangle}.$$ 
\end{proposition}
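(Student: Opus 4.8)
The plan is to show that the relations (a.), (b.), (c.) for the lower-case operators and the relations (A'.), (B'.), (C'.) for the upper-case operators, while not literally identical, allow one to reduce any matrix element $\langle w|y|v\rangle$ (resp. $\langle W|Y|V\rangle$) to a canonical normal form, and that these normal forms agree after normalization. Concretely, I would first observe that (a.) and (A'.) differ only by the factor $c$: $de-qed = c(d+e)$ versus $DE-qED = D+E$. Similarly (b.) reads $\alpha\langle w|e - \gamma\langle w|d = c\langle w|$ while (B'.) reads $\alpha\langle W|E - \gamma\langle W|D = \langle W|$, again differing by $c$. Finally (c.) is $(d+e)|v\rangle = |v\rangle$ while (C'.) is $(D+E)|V\rangle = \tfrac{1}{c}|V\rangle$. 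So the natural guess is that the substitution $D \leftrightarrow c\, d$, $E \leftrightarrow c\, e$ (as formal symbols) intertwines the two sets of relations, up to tracking powers of $c$.

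The key steps, in order: First, I would make precise the claim that for any word $y$ of length $\ell$ in $d,e$, the scalar $\langle w|y|v\rangle / \langle w|v\rangle$ is determined by relations (a.), (b.), (c.) alone — this is exactly the content of the Remark following Theorem~\ref{semi-ansatz} (repeatedly applying the relations rewrites any word into a multiple of $|v\rangle$ on the right and $\langle w|$ on the left, collapsing to a scalar times $\langle w|v\rangle$). Second, I would run the same reduction procedure for the upper-case operators using (A'.), (B'.), (C'.); here relation (C'.) produces a factor of $\tfrac1c$ each time one absorbs a $D+E$ on the right, and (B'.) produces a $1$ instead of a $c$. Third — and this is the crux — I would check by induction on $\ell$ that if $y$ is a word of length $\ell$ and $Y$ the corresponding word, then $\langle W|Y|V\rangle = c^{-\ell}\,\langle w|y|v\rangle \cdot \big(\langle W|V\rangle / \langle w|v\rangle\big)$, i.e. the two normal-form reductions track each other with precisely an extra $c^{-\ell}$. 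The cleanest way to organize this is to note that the reduction algorithm never needs to know what $c$ is until the very end; each application of (a.)/(A'.) that lowers degree by one introduces a relative factor of $c^{-1}$ between the two computations, each application of (b.)/(B'.) at the left boundary that lowers degree by one introduces a relative factor of $c^{-1}$, and each application of (c.)/(C'.) at the right boundary that lowers degree by one introduces a relative factor of $c^{-1}$ — in every case the degree drops by exactly one and the upper-case side picks up one extra $c^{-1}$, so after reducing all the way down to length $0$ the total relative factor is $c^{-\ell}$.

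Once that identity is established, the Proposition follows immediately: applying it with $y$ the given word (so $\langle w|y|v\rangle = c^{\ell}\langle W|Y|V\rangle \cdot \langle w|v\rangle/\langle W|V\rangle$) and also with $y = (d+e)^{\ell}$ (so $\langle w|(d+e)^{\ell}|v\rangle = c^{\ell}\langle W|(D+E)^{\ell}|V\rangle \cdot \langle w|v\rangle/\langle W|V\rangle$), the factors $c^{\ell}$ and $\langle w|v\rangle/\langle W|V\rangle$ cancel in the ratio, giving
$$\frac{\langle w|y|v\rangle}{\langle w|(d+e)^{\ell}|v\rangle} = \frac{\langle W|Y|V\rangle}{\langle W|(D+E)^{\ell}|V\rangle}.$$
The main obstacle is making the inductive bookkeeping on the $c$-powers genuinely rigorous: one has to argue that the reduction procedure is well-defined (produces a unique scalar regardless of the order in which relations are applied) — but this is guaranteed because the scalar is the actual matrix element $\langle w|y|v\rangle$, which is a fixed number, so any valid sequence of rewrites must land on the same value. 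A slightly cleaner alternative I might use instead is to avoid the reduction algorithm entirely and instead define new operators $\hat D = cD$, $\hat E = cE$, $\langle\hat W| = \langle W|$, $|\hat V\rangle = c|V\rangle$ (or some such rescaling), verify directly that $\hat D, \hat E, \langle\hat W|, |\hat V\rangle$ satisfy (a.), (b.), (c.), and then invoke the uniqueness Remark after Theorem~\ref{semi-ansatz} to conclude $\langle w|y|v\rangle/\langle w|v\rangle = \langle\hat W|\hat Y|\hat V\rangle/\langle\hat W|\hat V\rangle$ for $\hat Y$ the word in $\hat D,\hat E$; expanding $\hat Y = c^{\ell}Y$ then yields the desired relation after the $c$-powers cancel in the normalized ratio. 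Either way, the heart of the matter is the compatibility of the two rescaled ansätze, and I expect verifying that rescaling to be the only place requiring care.
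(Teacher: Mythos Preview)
Your proposal is correct, and the ``slightly cleaner alternative'' you describe at the end is exactly the route the paper takes: it sets $\widetilde D = cD$, $\widetilde E = cE$ (no rescaling of the vectors is needed), checks directly that $\widetilde D, \widetilde E, \langle W|, |V\rangle$ satisfy (a.), (b.), (c.), invokes the uniqueness Remark after Theorem~\ref{semi-ansatz}, and then observes that the $c^{\ell}$ factors cancel in the ratio. Your first approach via inductive reduction would also work, but the paper bypasses it entirely in favor of the rescaling argument.
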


\begin{proof}
Let $\widetilde{D} = cD$ and $\widetilde{E} = cE$.  Since
$D, E$, $\langle W|$, and $V \rangle$ satisfy 
(A'.), (B'.), and (C'.), it is easy to verify that
$\widetilde{D}, \widetilde{E}$, $\langle W|$, and $V \rangle$ satisfy 
(a.), (b.), and (c.).  
We also have that $d, e$, $\langle w|$, and $|v \rangle$ satisfy 
(a.), (b.), and (c.).  
Therefore both of them yield the same measure, as defined in 
Theorem \ref{semi-ansatz}.
Letting $\widetilde{Y}$ denote the word in 
$\widetilde{D}$ and $\widetilde{E}$ corresponding to $Y$, 
we have that 
$$\frac{\langle w|y|v\rangle}{\langle w|(d+e)^{\ell}|v\rangle}
 =  \frac{\langle W|\widetilde{Y}|V\rangle}{\langle W|(\widetilde{D}+\widetilde{E})^{\ell}|V\rangle}
 =  \frac{\langle W|Y|V\rangle}{\langle W|(D+E)^{\ell}|V\rangle}.$$ 
\end{proof}

Theorem \ref{mainthm} now follows from 
Proposition \ref{measures}, and Theorems \ref{semi-ansatz} and \ref{ansatz}.

\section{Conclusion}\label{conclusion}

In this paper we have given a combinatorial interpretation
for the stationary measures of the semi-infinite ASEP.
More specifically, one may compute the finite correlation functions
of the stationary measures using sums over 
\emph{staircase tableaux}, with the parameters
$\alpha$, $\beta=c$, $\gamma$, $\delta=-c$, and $q$.

In particular, 
we have demonstrated that a rather nonsensical specialization of 
the stationary distribution 
of the finite ASEP  -- the specialization $\delta = -\beta$ -- can 
be given a meaningful interpretation in terms of the 
ASEP on a semi-infinite lattice.

One might ask more generally when this phenomenon can occur. For concreteness,
in the discussion below, we will consider \emph{finite} Markov chains.
\begin{itemize}
\item  Consider a Markov chain $M$ whose transition matrix is written in terms of 
  one or more parameters (e.g. hopping rates).  
  Typically we don't consider $M$ to ``make sense"
  unless these parameters are non-negative.
\item  Recall that the stationary distribution $\mu$ of a Markov chain is the 
  unique left eigenvector of the transition matrix associated with eigenvalue $1$.
\item  One may choose a specialization of the parameters and consider the corresponding
  specialization of $\mu$.
\item  If one makes one or more parameters negative (or even complex),  when can
  one still give a probabilistic or physical meaning to the corresponding ``stationary distribution,"
  that is, the corresponding specialization of $\mu$?
\end{itemize}

\end{document}